\documentclass[11pt]{article}
\usepackage{fullpage}
\usepackage{algorithmic}
\usepackage{epsfig}
\usepackage{epstopdf}
\usepackage{graphicx}
\usepackage{latexsym}
\usepackage{amsmath}
\usepackage{amsfonts}
\usepackage{amssymb}

\usepackage{mathrsfs}
\usepackage{float}
\usepackage{pifont}
\usepackage{yhmath}
\usepackage{bbm}
\usepackage{eufrak}
\usepackage{amsthm}
\usepackage{booktabs}
\usepackage[usenames,dvipsnames]{color}
\usepackage{stmaryrd}
\usepackage{wasysym}
\usepackage{array}
\usepackage[ruled]{algorithm2e}

\usepackage{bigstrut,multirow,rotating}
\usepackage{cleveref}

\bibliographystyle{plain}

\newtheorem{theorem}{Theorem}[section]
\newtheorem{lemma}[theorem]{Lemma}

\newtheorem{corollary}[theorem]{Corollary}
\newtheorem{proposition}[theorem]{Proposition}
\newtheorem{definition}[theorem]{Definition}

\theoremstyle{remark}

\makeatletter
\newcommand\figcaption{\def\@captype{figure}\caption}
\newcommand\tabcaption{\def\@captype{table}\caption}
\makeatother

\newcommand{\bluecomment}[1]{\textcolor{blue}{\textrm{#1}}}

\begin{document}

\title{\bf A $1/2$-Approximation for Budgeted $k$-Submodular Maximization
}

\date{}
\maketitle

\vspace{-3em}
\begin{center}%\large

\author{Chenhao Wang\\
${}$\\
 Beijing Normal University-Zhuhai\\
  Beijing Normal-Hong Kong Baptist University}
\end{center}
\vspace{1em}

% The correct dates will be entered by the editor

\maketitle

\begin{abstract}
A $k$-submodular function naturally generalizes submodular functions by taking as input $k$ disjoint subsets, rather than a single subset. %while preserving the property of diminishing returns.
 Unlike standard submodular maximization, which only requires selecting elements for the solution, $k$-submodular maximization adds the challenge of determining the subset to which each selected element belongs.
%Extensive research has focused on developing approximation algorithms for $k$-submodular maximization under various constraints.
Prior research has shown that the greedy algorithm is a $\frac12$-approximation for the monotone $k$-submodular maximization problem under cardinality or matroid constraints. However, whether a firm $\frac12$-approximation exists for the \emph{budgeted} version (i.e., with a knapsack constraint) has remained open for several years.  We resolve this question affirmatively by proving that the \textsc{1-Guess Greedy} algorithm—which first guesses an appropriate element from an optimal solution before proceeding with the greedy algorithm—achieves a $\frac12$-approximation.
This result is asymptotically tight as $(\frac{k+1}{2k}+\epsilon)$-approximation requires exponentially many value oracle queries even without constraints (Iwata et al., SODA 2016). We further show that \textsc{1-Guess Greedy} is $\frac13$-approximation for the non-monotone problem. 
This algorithm is both simple and parallelizable, making it well-suited for practical applications. Using the thresholding technique from (Badanidiyuru \& Vondr{\'a}k, SODA 2014), it runs in nearly $\tilde O(n^2k^2)$ time. 

The proof idea is simple: we introduce a novel continuous transformation from an optimal solution to a greedy solution, using the multilinear extension to evaluate every fractional solution during the transformation. This continuous analysis approach  yields two key extensions. First, it enables improved approximation ratios of various existing algorithms  for the budgeted setting, including several greedy and streaming algorithm variants. For example, we show that the classic algorithm of returning the better one between the greedy solution and the best singleton achieves $\frac13$-approximation in the monotone case and $\frac14$-approximation in the non-monotone case. Second, our method naturally extends to $k$-submodular maximization problems under broader constraints, offering a more flexible and unified analysis framework.
\end{abstract}

%%%%%%%%%%%%%%%%%%%%%%%%%%%%%%%%%%%%%%%

\section{Introduction}\label{sec:intro}

Let $V=\{e_1,\ldots,e_n\}$ be a set of $n$ elements and $2^V$ denote the power set of $V$. A set function $f:2^V\rightarrow \mathbb R$ is \emph{submodular} if  for all sets $X,Y\in 2^V$,
\[f(X)+f(Y)\ge f(X\cup Y)+f(X\cap Y).\]
The submodularity is characterized by the property of \emph{diminishing returns}: for any $X\subseteq Y$ and $a\in V\setminus Y$, $f(X\cup \{a\})-f(X)\ge f(Y\cup \{a\})-f(Y)$. This natural property has established submodular functions as key concepts across numerous domains.

Motivated by Lov{\'a}sz's question \cite{lovasz1983submodular}  concerning generalizations of submodularity that preserve desirable properties such as efficient optimization algorithms, the notion of \emph{$k$-submodular} function was first introduced by Huber and Kolmogorov \cite{huber2012towards}, with earlier related work in \cite{cohen2006complexity}. 
A $k$-submodular function  extends submodularity by considering interactions across $k$ dimensions. Specifically, while a submodular function is defined on single subsets of $V$,  a $k$-submodular function operates on $k$ disjoint subsets of $V$.
Let 
$(k+1)^V:=\{(X_1,\ldots,X_k)~|~X_i\subseteq V ~\forall i\in[k], X_i\cap X_j=\varnothing ~\forall i\neq j\}$ be the family of all $k$-tuples of disjoint subsets, where $[k]:=\{1,\ldots,k\}$.

\begin{definition}\label{def:ksub}
A function $f:(k+1)^V\rightarrow \mathbb R$ is  \emph{$k$-submodular} if and only if  for every $k$-tuples $\mathbf x=(X_1,\ldots,X_k)$ and $\mathbf y=(Y_1,\ldots,Y_k)$ in $(k+1)^V$, 
\[f(\mathbf x)+f(\mathbf y)\ge f(\mathbf x \sqcup\mathbf y)+f(\mathbf x \sqcap\mathbf y),~~\text{where}\]
\[\mathbf x\sqcup\mathbf y:=\Big(X_1\cup Y_1\setminus(\bigcup_{i\neq 1}X_i\cup Y_i),\ldots,X_k\cup Y_k\setminus(\bigcup_{i\neq k}X_i\cup Y_i)\Big),\]
\[\mathbf x\sqcap\mathbf y:=\left(X_1\cap Y_1,\ldots,X_k\cap Y_k\right).\]    
\end{definition}

We write $\mathbf{x} \preceq \mathbf{y}$ if $X_i \subseteq Y_i$ for all $i \in [k]$. For any $\mathbf{x} \preceq \mathbf{y}$, define $\mathbf{y} \setminus \mathbf{x} = (Y_1 \setminus X_1, \ldots, Y_k \setminus X_k)=\mathbf x\sqcap\mathbf y$.  
The function $f$ is called \emph{monotone} if $f(\mathbf{x}) \leq f(\mathbf{y})$ whenever $\mathbf{x} \preceq \mathbf{y}$. We also assume $f(\mathbf{0}) = 0$, where $\mathbf{0} = (\varnothing, \ldots, \varnothing)$. Notably, when $k=1$, $f$ reduces to a submodular function, and when $k=2$, $f$ is known as a bisubmodular function \cite{mccormick2010strongly,huber2014skew}.  

%While the concept of $k$-submodular functions has been studied previously in \cite{cohen2006complexity}, 
% While explicitly given $k$-submodular functions can be minimized in polynomial time \cite{thapper2012power}, $k$-submodular maximization is known to be NP-hard since its special case submodular function  generalizes many standard NP-hard problems. Compared with submodular maximization that involves determining which elements to include in the solution,  $k$-submodular maximization introduces an additional challenge: specifying the subsets to which these elements belong.
%  In the recent decade, there has been growing theoretical and algorithmic interest in the study of $k$-submodular maximization problems. %as various problems in combinatorial optimization and machine learning can be formulated as $k$-submodular function maximization. %The problem have been widely studied due to its abroad application,
% From theoretical perspective, $k$-submodular maximization generalizes many interesting problems such as Max $k$-Cut, Max $k$-Partition, and submodular welfare problem (see discussions in \cite{iwata2016improved}). From practical perspective, it has application scenarios in influence maximization, sensor placement, and multi-class feature selection \cite{singh2012bisubmodular}.  

While explicitly given  $k$-submodular functions can be minimized in polynomial time \cite{thapper2012power}, maximizing  $k$-submodular functions is NP-hard, as even the special case of submodular maximization encapsulates many standard NP-hard problems. In contrast to classic submodular maximization, which involves only selecting elements for inclusion, $k$-submodular maximization presents the additional challenge of assigning each element to one of $k$ disjoint subsets.
Over the past decade, there has been increasing theoretical and algorithmic interest in 
$k$-submodular maximization. Theoretically, this problem generalizes several notable combinatorial optimization problems, such as Max $k$-Cut, Max $k$-Partition, and the submodular welfare problem (see \cite{iwata2016improved}). On the practical side, $k$-submodular maximization has found applications in influence maximization, sensor placement, and  feature selection \cite{singh2012bisubmodular}.

% Extensive research has focused on developing approximation algorithms for $k$-submodular maximization under support constraints. Formally, given a non-negative $k$-submodular function $f:(k+1)^V\rightarrow \mathbb R_{\ge 0}$, the problem is
%     \begin{align}
%     &\text{Maximize}~f(\mathbf x)\\
%     &\text{s.t.}~~\text{supp}(\mathbf x)\in\mathcal P, \nonumber
% \end{align}
% where $\text{supp}(\mathbf x)=\cup_{i=1}^kX_i$ is the \emph{support set} of $\mathbf x=(X_1,\ldots,X_k)$, and $\mathcal P\subseteq 2^V$ is a family of subsets of $V$. 
% {A common approach to tackle them is the \textsc{Greedy} algorithm, which iteratively adding ``feasible'' element to some subset with the maximum marginals (the feasibility depends on the contexts).} 
% We summarize the most notable results in literature.

Extensive research has focused on developing approximation algorithms for $k$-submodular maximization under support constraints. Formally, given a non-negative $k$-submodular function $f:(k+1)^V\rightarrow \mathbb{R}_{\ge 0}$, the problem is 
\begin{align}  
    &\text{Maximize}~f(\mathbf{x})\\
    &\text{s.t.}~\operatorname{supp}(\mathbf{x})\in \mathcal{P}, \nonumber  
\end{align}  
where $\operatorname{supp}(\mathbf{x}) = \cup_{i=1}^k X_i$ is the \emph{support set} of $\mathbf{x} = (X_1,\ldots,X_k)$, and $\mathcal{P} \subseteq 2^V$ is a prescribed family of subsets of $V$.  
A widely used approach is the greedy algorithm (\textsc{Greedy}, for short), which iteratively adds a feasible element to one of the $k$ subset such that the marginal gain (or density) is maximized. 
Below, we summarize the most notable results from the literature:  

\begin{itemize}  
    \item \textbf{Unconstrained setting.} $\mathcal{P} = 2^V$ is the power set of $V$. For monotone $f$, \textsc{Greedy} achieves a $\frac{1}{2}$-approximation by Ward and {\v{Z}}ivn{\`y} \cite{ward2016maximizing}. Iwata et al.\ \cite{iwata2016improved} present a randomized $\frac{k}{2k-1}$-approximation algorithm and show that any $(\frac{k+1}{2k}+\epsilon)$-approximation requires an exponential number of value oracle queries. For non-monotone $f$, \textsc{Greedy} is $\frac{1}{3}$-approximation \cite{ward2016maximizing}, while Oshima \cite{oshima2021improved} gives a randomized $\frac{k^2+1}{2k^2+1}$-approximation.  
    
    \item \textbf{Cardinality constraint.} $\mathcal{P}$ consists of all subsets of cardinality at most a given $B \in \mathbb{Z}$. \textsc{Greedy} achieves a $\frac{1}{2}$-approximation for monotone $f$ \cite{ohsaka2015monotone} and a $\frac{1}{3}$-approximation for non-monotone $f$ \cite{nguyen2020streaming}.  

    \item \textbf{Matroid constraint.} $(V, \mathcal{P})$ is a matroid of rank $B$, where $\mathcal{P}$ is a family of independent sets. \textsc{Greedy} yields a $\frac{1}{2}$-approximation for monotone $f$ \cite{sakaue2017maximizing} and a $\frac{1}{3}$-approximation for non-monotone $f$ \cite{sun2023maximization}.  

    \item \textbf{Knapsack constraint.} $\mathcal{P} = \{S \subseteq V : c(S) \le B \}$ includes all subsets whose total cost does not exceed $B \in \mathbb{Z}$, with $c(e)$ denoting the cost of $e \in V$ and $c(S) = \sum_{e\in S}c(e)$. Xiao et al.~\cite{xiao2024approximation} show that the \textsc{4-Guess Greedy} algorithm (to be described shortly) achieves a $\frac{1}{2}(1-e^{-2}) \approx 0.432$-approximation for the monotone case, and \textsc{7-Guess Greedy} achieves a $\frac{1}{3}(1-e^{-3}) \approx 0.316$-approximation for the non-monotone case. Recently, Zhou et al.~\cite{zhou2025iclr} propose a  multilinear extension-based continuous algorithm with an approximation ratio of $(\frac{1}{2} - \epsilon)$ for the monotone case and $(\frac{1}{3} - \epsilon)$ for the non-monotone case, and their results apply to multiple knapsack constraints. 
\end{itemize}  

\subsection*{Our contributions}

While monotone $k$-submodular maximization under other support constraints admits a $\frac12$-approximation via the greedy algorithm, whether a $\frac12$-approximation exists for the  $k$-submodular knapsack maximization ($k$SKM) problem has remained unclear. Inspired by the submodular knapsack problem, the focus has been on  the \textsc{$q$-Guess Greedy} algorithms. 
In \textsc{$q$-Guess Greedy} for $k$SKM, all feasible solutions with cardinality at most $q$ are first enumerated;  starting from each, the greedy algorithm is performed by iteratively adding elements and their placements so as to maximize the marginal density (i.e., marginal gain divided by cost), until the budget is exhausted. The algorithm returns the best solution among them. While \textsc{2-Guess Greedy} achieves the optimal $(1-\frac1e)$-approximation  for submodular knapsack \cite{FeldmanNS23,kulik2021refined}, its performance guarantee for $k$SKM has not been tightly analyzed.
Tang et al. \cite{DBLP:journals/orl/TangWC22} first show \textsc{3-Guess Greedy} to be $0.316$-approximation for monotone $k$SKM and later improve it to $0.4$. Xiao et al. \cite{xiao2024approximation} show \textsc{4-Guess Greedy} to be $0.432$-approximation, which is the currently best-known ratio among combinatorial algorithms. 
The non-monotone case is in the same situation but the concern is a  $\frac13$-approximation. 

In this work, we confirm that a single guess suffices to achieve the desired $\frac12-$ and $\frac13-$approximations for $k$SKM, thus matching the results under other support constraints. Moreover, the  $\frac12$-approximation for monotone case is asymptotically tight due to the  $(\frac{k+1}{2k}+\epsilon)$-approximation impossibility result by Iwata et al. \cite{iwata2016improved}.

\begin{theorem}
  \textsc{1-Guess Greedy} is $\frac12$-approximation for monotone $k$SKM and $\frac13$-approximation for non-monotone $k$SKM. It requires $O(n^3k^2)$ queries to the value oracle. 
\end{theorem}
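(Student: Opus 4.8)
The plan is to work with the two marginal-based characterizations of $k$-submodularity---orthant submodularity (the marginal gain $\Delta_{e,i}f(\mathbf x):=f(\ldots,X_i\cup\{e\},\ldots)-f(\mathbf x)$ is non-increasing as $\mathbf x$ grows) and pairwise monotonicity ($\Delta_{e,i}f(\mathbf x)+\Delta_{e,j}f(\mathbf x)\ge 0$ for $i\ne j$)---and to lift these to the multilinear extension $F(\mathbf z)=\mathbb E[f(\mathbf x(\mathbf z))]$, where each element $e$ is independently placed into subset $i$ with probability $z_{e,i}$. First I would record the basic calculus of $F$: it agrees with $f$ on integral points, it is multilinear in each element's coordinates, and its partial derivatives inherit orthant submodularity and pairwise monotonicity. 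This machinery is what lets me assign a well-defined value to every intermediate fractional point encountered while morphing $\mathbf o$ into the greedy solution $\mathbf s$.

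Second, I would set up the greedy side. Writing $\mathbf s$ for the solution produced from the guess that matches the appropriate element of an optimal $\mathbf o$, I would express its value as a profile $g(b)=f(\mathbf s_b)$ indexed by consumed budget $b\in[0,B]$, interpolating each element's contribution linearly at its density rate $\rho=\Delta_{e,i}f/c(e)$. Orthant submodularity makes the maximum available density non-increasing along the run, so $g$ is concave with slope equal to the current greedy density. The single guess is meant to neutralize the one genuinely dangerous feature of the knapsack setting: the \emph{boundary} element of $\mathbf o$ whose density exceeds greedy's terminal density but which greedy cannot afford once the budget is nearly spent.

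Third---the heart of the argument---I would construct a continuous transformation $\mathbf z(t)$, $t\in[0,1]$, with $\mathbf z(0)$ encoding $\mathbf o$ and $\mathbf z(1)$ encoding $\mathbf s$, that removes optimal mass and injects greedy mass in a cost-synchronized fashion, so that at each instant the cost released by $\mathbf o$ matches the cost consumed by $\mathbf s$ at the corresponding budget level. Differentiating $F(\mathbf z(t))$, the greedy-addition direction contributes $+\rho$ per unit cost, while the optimal-removal direction loses value at a rate that I bound, via orthant submodularity, by the greedy density prevailing at the matching budget level; pairwise monotonicity then supplies the extra reassignment term incurred when an optimal element must leave one subset as greedy fills another. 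Integrating the resulting differential inequality over $[0,1]$ telescopes to $f(\mathbf o)-f(\mathbf s)\le f(\mathbf s)$ in the monotone case, giving $f(\mathbf s)\ge\frac12 f(\mathbf o)$; in the non-monotone case the loss of non-negativity of the marginals forces an additional $\Delta$-term, yielding $f(\mathbf o)-f(\mathbf s)\le 2f(\mathbf s)$ and hence $\frac13$.

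Finally, the query bound is a direct count: there are $O(nk)$ candidate single guesses (together with the empty guess), and from each start the greedy performs $O(n)$ rounds, each scanning $O(nk)$ element--index marginals, for $O(n^2k)$ oracle calls per start and $O(n^3k^2)$ in total. I expect the main obstacle to be the design and verification of the continuous transformation in the third step: one must synchronize the cost accounting so that greedy's per-cost density legitimately upper-bounds the per-cost loss of optimal value, and---most delicately---argue that a \emph{single} guess already absorbs the boundary element, where prior combinatorial analyses needed four or seven guesses and still lost a $(1-e^{-q})$ factor. The continuous evaluation of $F$ is precisely what removes the lumpy overcounting of those discrete arguments and lets one guess suffice.
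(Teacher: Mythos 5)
Your step 3---the cost-synchronized continuous transformation from $\mathbf o$ to $\mathbf s$, evaluated through the multilinear extension, with the per-cost loss rate bounded by the greedy per-cost gain rate via orthant submodularity (plus pairwise monotonicity supplying the extra factor in the non-monotone case), integrated to $f(\mathbf o)-f(\mathbf s)\le f(\mathbf s)$, resp.\ $f(\mathbf o)-f(\mathbf s)\le 2f(\mathbf s)$---is exactly the paper's key lemma, and your query count matches. But your proposal has a genuine gap precisely where you flag ``the main obstacle'': you never supply the argument that a single guess absorbs the rejected boundary element, and this is not a refinement of the transformation; it is a separate combinatorial step that the transformation alone cannot deliver. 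The paper's resolution is a three-part decomposition keyed to \emph{which} singleton is guessed: guess the maximum-\emph{value} singleton $\mathbf y\preceq\mathbf o^*$, let $\mathbf z$ be the maximum-\emph{cost} singleton of $\mathbf o^*\setminus\mathbf y$, let $\mathbf r=\mathbf o^*\setminus\mathbf y\setminus\mathbf z$, and write $f(\mathbf o^*)=f(\mathbf y)+f_{\mathbf y}(\mathbf r)+f_{\mathbf y\sqcup\mathbf r}(\mathbf z)$, where $f_{\mathbf y}(\cdot)=f(\cdot\sqcup\mathbf y)-f(\mathbf y)$ is the contracted (still $k$-submodular) function. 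Since the first element $e$ that greedy rejects lies in $\mathbf o^*\setminus\mathbf y$, its cost is at most $c(\mathbf z)$, hence $c(\mathbf s\setminus\mathbf y)>c(\mathbf o^*)-c(e)-c(\mathbf y)\ge c(\mathbf r)$; the transformation lemma is then applied \emph{to $f_{\mathbf y}$}, comparing $\mathbf s\setminus\mathbf y$ against $\mathbf r$, yielding $f_{\mathbf y}(\mathbf s\setminus\mathbf y)\ge\frac12 f_{\mathbf y}(\mathbf r)$. The value riding on the rejected part is covered by the guess itself: orthant submodularity gives $f_{\mathbf y\sqcup\mathbf r}(\mathbf z)\le f(\mathbf z)\le f(\mathbf y)$ by the max-value choice of $\mathbf y$, so $f(\mathbf s)=f(\mathbf y)+f_{\mathbf y}(\mathbf s\setminus\mathbf y)\ge\frac12 f(\mathbf y)+\frac12 f_{\mathbf y\sqcup\mathbf r}(\mathbf z)+\frac12 f_{\mathbf y}(\mathbf r)=\frac12 f(\mathbf o^*)$. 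Without this decomposition (or an equivalent), your differential inequality only compares $\mathbf s$ with the affordable part of $\mathbf o^*$, and the value on the rejected element is simply lost; nothing in the continuous machinery recovers it.

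A second, smaller gap: your synchronization presupposes that dismantling $\mathbf o$ and building $\mathbf s$ run over the same total cost, i.e.\ $c(\mathbf s)=c(\mathbf o)$, whereas the rejection argument only gives $c(\mathbf s\setminus\mathbf y)\ge c(\mathbf r)$, with strict inequality in general. The paper bridges this with an intermediate lemma: writing $\beta=c(\mathbf s)/c(\mathbf o)$, one gets $f(\mathbf s)\ge\min\{\frac{\beta}{2},\frac12\}f(\mathbf o)$ (resp.\ with $3$ in place of $2$), proved for $\beta\le 1$ by replacing the last greedy element with a virtual element of inflated cost and equal density and using that greedy densities are non-increasing, and for $\beta>1$ by padding $\mathbf o$ with zero-value dummies. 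This is the statement actually invoked in the theorem, and your plan needs it (or its equivalent) spelled out. A final technicality you should also note: when the same element appears in both $\mathbf o$ and $\mathbf s$ (possibly in different components), the combined fractional point can violate the simplex constraint $\sum_j \mathbf x_{i,j}\le 1$ unless the removal of each element from $\mathbf o(t)$ is scheduled no later than its insertion into $\mathbf s(t)$; the paper imposes exactly this ordering restriction to keep $\mathbf x(t)\in\Delta_k^n$.
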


We elaborate on this result from several perspectives.  
First, compared to previous \textsc{$q$-Guess Greedy} studies, our analysis shows that existing bounds in the literature are quite loose. Not only do we improve the approximation ratio, but we also significantly reduce the enumeration size and query complexity. For example, while \textsc{4-Guess Greedy}~\cite{xiao2024approximation} requires $O(n^6 k^5)$ queries, our \textsc{1-Guess Greedy} algorithm only needs $O(n^3 k^2)$.  

Second, the $(\frac{1}{2} - \epsilon)$-approximation continuous algorithm based on the multilinear extension~\cite{zhou2025iclr} incurs a query complexity of $O(k^{\mathrm{poly}(1/\epsilon)} n^{\mathrm{poly}(1/\epsilon)})$, which makes it of theoretical interest only,  even for moderate values of $\epsilon$. In contrast, the combinatorial algorithm \textsc{1-Guess Greedy} is both simple and parallelizable,  and  the $O(n^3k^2)$ query complexity  can be further reduced by a factor of $n$ via the decreasing-threshold technique by Badanidiyuru and Vondr{\'a}k \cite{badanidiyuru2014fast}, with only an $\epsilon$ loss in the approximation ratio, making it more practical.  

Finally, for the monotone submodular knapsack maximization, it is known that \textsc{1-Guess Greedy} achieves a $0.617$-approximation~\cite{FeldmanNS23}. This suggests that the $\frac{1}{2}$-approximation for $k$SKM may not be tight when $k$ is small, and it is natural to ask whether better approximations are possible in this regime. This question remains open and challenging, as no algorithm is known to beat the $\frac{1}{2}$-approximation even for the special case of the cardinality constraint.  

\medskip 
\noindent\textbf{Proof Overview.} For $k$-submodular maximization without constraint or with a cardinality/matroid constraint, to prove the greedy solution is $\frac12$-approximation, the works \cite{ward2016maximizing,ohsaka2015monotone,sakaue2017maximizing} use a discrete transformation from an optimal solution $\mathbf o$ to the greedy solution $\mathbf s$, which replaces one element in $\mathbf o$ by one element in $\mathbf s$ at each time. This transformation makes sense because $|\text{supp}(\mathbf o)|=|\text{supp}(\mathbf s)|$ (note that the maximum independent sets in a matroid have equal size). They prove that at each time the loss of value in this transformation is no more than the gain of the greedy procedure, and summing up over all time gives $f(\mathbf o)-f(\mathbf s)\le f(\mathbf s)-f(\mathbf 0)=f(\mathbf s)$; then the $\frac12$-approximation follows.   However, the discrete transformation does not work well for $k$SKM  because $\mathbf o$ and $\mathbf s$ may contain different number of elements. Previous works on $k$SKM \cite{ha2024improved,pham2022maximizing,xiao2024approximation,DBLP:journals/orl/TangWC22,DBLP:journals/tcs/TangCW24} continue to use this transformation, resulting in bounds that are far from tight.

The key idea of our proof is simple: we introduce a novel continuous transformation instead.  
Let $\mathbf x$ be the continuous process of transforming an optimal solution $\mathbf o$ into  $\mathbf s$, where $\mathbf s$ denotes the greedy solution at the point when the greedy procedure first ``rejects'' an element in $\text{supp}(\mathbf o)$ due to the budget constraint, though that element has the maximum marginal density. Each time $t$ of this continuous process corresponds to a \emph{fractional} solution $\mathbf x(t)$ and a value $F(\mathbf x(t))$, where $F$ is the multilinear extension of $f$. The transforming rate on an element $e$ is $\frac{1}{c(e)}$, and hence, transforming $e$ from one subset to another takes time $c(e)$.  By showing that the negative rate of change of value $F(\mathbf x(t))$ is no more than the rate of change of the greedy value and integrating over all $t$ (from $t=0$ to $t=c(\mathbf o)$), we have 
\[f(\mathbf o)-f(\mathbf s)=F(\mathbf x(0))-F(\mathbf x(c(\mathbf o)))\le f(\mathbf s)-f(\mathbf 0)=f(\mathbf s)\]
when $c(\mathbf o)= c(\mathbf s)$ (see \cref{lem:csco}), and further $f(\mathbf s)\ge \frac{c(\mathbf s)}{2c(\mathbf o)}f(\mathbf o)$ when $c(\mathbf s)\le c(\mathbf o)$ (see \cref{lem:csf}). Finally,  using half the value of the guessed singleton solution to cover the value of the rejected element in $\mathbf o$, we prove the $\frac12$-approximation of \textsc{1-Guess Greedy}. The proof for the $\frac13$-approximation in the non-monotone case is similar.

Our continuous approach leads to two main extensions. First, it enables improved approximation guarantees for a variety of existing algorithms for $k$SKM \cite{ha2024improved,pham2022maximizing,DBLP:journals/tcs/TangCW24}, including several greedy and streaming algorithm variants. For instance, we show that the classic algorithm which returns the better of the greedy solution and the best singleton is $\frac13$-approximation in the monotone case and $\frac14$-approximation in the non-monotone case, improving the previous ratios of 0.273 and 0.219 \cite{DBLP:journals/tcs/TangCW24}. %Details are given in \cref{sec:dis} and  Appendix.
Second, our approach extends naturally to $k$-submodular maximization under other support constraints. It provides a more flexible and unified framework for analysis compared to the discrete transformation, as it can handle situations where the solutions of interest have different sizes.

\smallskip

This paper is organized as follows. In  \cref{sec:pre} we present preliminaries. In \cref{sec:mono} and \cref{sec:non} we consider monotone $k$SKM and non-monotone $k$SKM, respectively. In \cref{sec:dis} we discuss the extensions of our proof method. %and its applications on other algorithms and problems. 

%%%%%%%%%%%%%%%%%%%%%%%%%%%%%%%%%%%%%%%%%%%%%
\section{Preliminaries}\label{sec:pre}

%Ward and {\v{Z}}ivn{\`y}  characterize the $k$-submodularity by the \emph{orthant submodularity}  and  the \emph{pairwise monotonicity}.

%While few intuitions can be captured from Definition \ref{def:ksub},  
It is helpful to understand the characteristic properties of $k$-submodular functions. To illustrate these, for any  \( j \) with \( 1 \leq j \leq k \),  define the marginal gain of adding element $e\in V\setminus \text{supp}(\mathbf x)$ to the $j$-th component
of $\mathbf x\in (k + 1)^V$  to be 
\begin{align*}
    &\Delta_{e,j}(\mathbf x) = f(X_1, \ldots, X_{j-1}, X_j \cup \{ e \}, X_{j+1}, \ldots, X_k) - f(X_1, \ldots, X_k).   
\end{align*} 
Ward and {\v{Z}}ivn{\`y} \cite{ward2014maximizing} characterize
the $k$-submodularity by the following two properties: %extending the corresponding result on bisubmodular functions \cite{AndoFN96}. 
\begin{itemize}
    \item orthant submodularity: $
\Delta_{e,j}(\mathbf x) \geq \Delta_{e,j}(\mathbf y)$,  for all $\mathbf x\preceq \mathbf y$ 
%where $X_j\subseteq Y_j$ for any $j\in[k]$ 
and $e \notin \text{supp}(\mathbf y)$;
 \item pairwise monotonicity: $\Delta_{e,j}(\mathbf x) + \Delta_{e,j'}(\mathbf x) \geq 0$, 
for all $\mathbf x\in(k+1)^V$, $e \notin \text{supp}(\mathbf x)$, and $j,j' \in [k]$ with $j \neq j'$. 
\end{itemize}

\begin{proposition}[\cite{ward2014maximizing}]\label{prop:cha}
    A function $f:(k+1)^V\to \mathbb R$ is $k$-submodular if and only if $f$ is
orthant submodular and pairwise monotone.
\end{proposition}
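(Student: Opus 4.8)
The plan is to prove both implications. The forward direction ($k$-submodularity $\Rightarrow$ the two properties) is routine: feed well-chosen pairs into the defining inequality $f(\mathbf x)+f(\mathbf y)\ge f(\mathbf x\sqcup\mathbf y)+f(\mathbf x\sqcap\mathbf y)$. For orthant submodularity, given $\mathbf x\preceq\mathbf y$ and $e\notin\supp(\mathbf y)$, apply it to the pair $(\mathbf x^{+e,j},\mathbf y)$, where $\mathbf x^{+e,j}$ is $\mathbf x$ with $e$ added to its $j$-th set; a direct check gives $\mathbf x^{+e,j}\sqcup\mathbf y=\mathbf y^{+e,j}$ and $\mathbf x^{+e,j}\sqcap\mathbf y=\mathbf x$, which rearranges exactly to $\Delta_{e,j}(\mathbf x)\ge\Delta_{e,j}(\mathbf y)$. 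For pairwise monotonicity, apply it to $(\mathbf x^{+e,j},\mathbf x^{+e,j'})$ with $j\neq j'$ and $e\notin\supp(\mathbf x)$; since $e$ sits in conflicting coordinates, both join and meet drop it, so $\mathbf x^{+e,j}\sqcup\mathbf x^{+e,j'}=\mathbf x^{+e,j}\sqcap\mathbf x^{+e,j'}=\mathbf x$, giving $\Delta_{e,j}(\mathbf x)+\Delta_{e,j'}(\mathbf x)\ge0$.

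The reverse direction is the substance. Encode $\mathbf x,\mathbf y$ by coordinate assignments $a_v,b_v\in\{0,1,\dots,k\}$, and split the disagreement set into one-sided elements (one of $a_v,b_v$ is $0$) and conflicts (both nonzero and distinct); note both $\sqcup$ and $\sqcap$ drop every conflict. Write $\mathbf u:=\mathbf x\sqcup\mathbf y$ and $\mathbf l:=\mathbf x\sqcap\mathbf y$. I would first treat the benign case: call the pair conflict-free if it has no conflicts, and prove the inequality there by the classical telescoping argument — enumerate $\supp(\mathbf x)\setminus\supp(\mathbf y)$ and add these elements one at a time to the two nested bases $\mathbf l\preceq\mathbf y$, bounding each step by orthant submodularity exactly as in the ordinary ($k=1$) proof (this works precisely because conflict-freeness guarantees $\mathbf x,\mathbf y\preceq\mathbf u$). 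To reduce the general case to this, let $\mathbf y^{*}$ be $\mathbf y$ with all conflict elements $U_c$ deleted; then $(\mathbf x,\mathbf y^{*})$ is conflict-free, with meet still $\mathbf l$ and join $\mathbf u^{+U_c,a}$ (i.e.\ $\mathbf u$ with $U_c$ re-inserted in $\mathbf x$'s coordinates). The conflict-free case yields $f(\mathbf x)+f(\mathbf y^{*})\ge f(\mathbf u^{+U_c,a})+f(\mathbf l)$, and since $\mathbf y^{*}\preceq\mathbf u$, a diminishing-returns estimate for re-inserting $U_c$ in $\mathbf y$'s coordinates reduces the remaining discrepancy to the single inequality $f(\mathbf u^{+U_c,a})+f(\mathbf u^{+U_c,b})\ge 2f(\mathbf u)$.

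The crux is this "pure-conflict" inequality $f(\mathbf Q^{+S,a})+f(\mathbf Q^{+S,b})\ge 2f(\mathbf Q)$, for a base $\mathbf Q$, a set $S$ disjoint from $\supp(\mathbf Q)$, and assignments with $a_s\neq b_s$ on every $s\in S$. I would prove it by induction on $|S|$. The case $|S|=1$ is pairwise monotonicity at $\mathbf Q$. For the step, single out one $s\in S$ and view the values as a $3\times 3$ grid $G(\sigma,\rho)$ indexed by the placement $\sigma\in\{0,a_s,b_s\}$ of $s$ and the placement $\rho\in\{0,A,B\}$ of the block $S\setminus\{s\}$ (absent, via $a$, or via $b$). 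This grid inherits three properties: pairwise monotonicity in the $s$-coordinate, $G(a_s,\rho)+G(b_s,\rho)\ge 2G(0,\rho)$ (directly); a pairwise relation in the block coordinate, $G(\sigma,A)+G(\sigma,B)\ge 2G(\sigma,0)$ (this is exactly the induction hypothesis at base $\mathbf Q^{+s,\sigma}$); and an orthant/submodular cross relation $G(0,\rho)+G(\sigma,0)\ge G(\sigma,\rho)+G(0,0)$. The target is $G(a_s,A)+G(b_s,B)\ge 2G(0,0)$. Combining these in two symmetric ways yields the two lower bounds $G(a_s,A)+G(b_s,B)\ge 2G(0,0)\pm D$, where $D=[G(0,A)+G(0,B)]-[G(a_s,0)+G(b_s,0)]$; since $+D$ and $-D$ cannot both be negative, one bound already delivers $\ge 2G(0,0)$.

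The anticipated main obstacle is precisely the conflict elements. The naive inductive move — delete or insert a single disagreement element and invoke orthant submodularity — requires the two residual tuples to stay below a common upper bound, but inserting conflicts into different coordinates destroys this comparability, so orthant submodularity alone cannot close the argument, and pairwise monotonicity at one base is too weak once several conflicts interact. The grid lemma above, with its sign-of-$D$ case split, is what reconciles the two local conditions; the key structural decision is to arrange the whole proof so that this lemma is invoked at a single clean base $\mathbf u$, after the conflict-free part has already absorbed all one-sided differences.
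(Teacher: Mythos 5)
There is no in-paper proof to compare against: the paper states this proposition as an imported result of Ward and {\v{Z}}ivn{\`y} \cite{ward2014maximizing} and never proves it. Judged on its own merits, your proof is correct and complete. The forward direction is right: for the pair $(\mathbf x^{+e,j},\mathbf y)$ with $\mathbf x\preceq\mathbf y$ and $e\notin\supp(\mathbf y)$ one indeed gets join $\mathbf y^{+e,j}$ and meet $\mathbf x$, and for $(\mathbf x^{+e,j},\mathbf x^{+e,j'})$ the conflicting element is dropped by both operations, so the defining inequality specializes exactly to the two properties. The converse also checks out in all three stages. In the conflict-free case, $\mathbf x=\mathbf l^{+D,a}$ and $\mathbf x\sqcup\mathbf y=\mathbf y^{+D,a}$ for $D=\supp(\mathbf x)\setminus\supp(\mathbf y)$ with $\mathbf l\preceq\mathbf y$ and $D\cap\supp(\mathbf y)=\varnothing$, so telescoping orthant submodularity gives $f(\mathbf x)-f(\mathbf l)\ge f(\mathbf x\sqcup\mathbf y)-f(\mathbf y)$. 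In the reduction, $(\mathbf x,\mathbf y^{*})$ is conflict-free with meet $\mathbf l$ and join $\mathbf u^{+U_c,a}$, and since $\mathbf y^{*}\preceq\mathbf u$ with $U_c\cap\supp(\mathbf u)=\varnothing$, a second telescoping gives $f(\mathbf y)-f(\mathbf y^{*})\ge f(\mathbf u^{+U_c,b})-f(\mathbf u)$; adding the two displays shows that the pure-conflict inequality at base $\mathbf u$ is exactly what remains to be proved.

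The grid lemma, which is the one genuinely nonstandard step, is also sound; I verified the two bounds you assert. With $D=[G(0,A)+G(0,B)]-[G(a_s,0)+G(b_s,0)]$: chaining the induction hypothesis (ii) at $\sigma=a_s$ with the cross relation (iii) at $(a_s,B)$ yields $G(a_s,A)\ge G(a_s,0)+G(0,0)-G(0,B)$, and symmetrically $G(b_s,B)\ge G(b_s,0)+G(0,0)-G(0,A)$, giving $G(a_s,A)+G(b_s,B)\ge 2G(0,0)-D$; chaining pairwise monotonicity (i) at $\rho=A$ with (iii) at $(b_s,A)$ yields $G(a_s,A)\ge G(0,A)+G(0,0)-G(b_s,0)$, and symmetrically $G(b_s,B)\ge G(0,B)+G(0,0)-G(a_s,0)$, giving $G(a_s,A)+G(b_s,B)\ge 2G(0,0)+D$. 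Since $D$ and $-D$ cannot both be negative, the target follows. The induction is well-founded because the hypothesis is quantified over all bases, so invoking it at $\mathbf Q^{+s,\sigma}$ for $\sigma\in\{a_s,b_s\}$ is legitimate. The one thing you cannot claim is any relationship to ``the paper's proof''—if you want to know whether your conflict-free/pure-conflict decomposition matches the original argument, you would have to consult \cite{ward2014maximizing} itself.
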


%For a $k$-submodular maximization problem, an algorithm is \emph{$\alpha$-approximation} ($\alpha\le 1$) if it always achieves a function value of at least $\alpha$ times the optimal function value.   We assume that there is a \emph{value oracle} that given any $\mathbf x\in (k+1)^V$ returns the value $f(\mathbf x)$. As a standard approach,  we measure the complexity of algorithms by the number of oracle queries they perform.

%\autoref{prop1} and \cref{prop1} \ldots

It is often convenient to associate $\mathbf x=(X_1,\ldots,X_k)\in (k+1)^V$ with a vector $\mathbf x\in \{0,1,\ldots,k\}^n$, where $\mathbf x_i=j\in[k]$ if $e_i\in X_j$,   and $\mathbf x_i=0$ if $e_i\notin \text{supp}(\mathbf x)$. We use both notations interchangeably whenever the context is clear.  %Hence we sometimes abuse notation and simply write $\mathbf x=(X_1,\ldots,X_k)$  by regarding a vector $\mathbf x$ as a subpartition of $V$. 

%It is often convenient to identify $(k+1)^V$ as $\{0,1,\ldots,k\}^V$, that is, the set of $|V|$-dimensional vectors with entries in $\{0,1,\ldots,k\}$. Namely, we associate $(X_1,\ldots.X_k)\in (k+1)^V$ with a vector $\mathbf x\in \{0,1,\ldots,k\}^V$ by $X_i=\{e\in V|\mathbf x(e)=i\}$  for $1\le i\le k$. Hence we sometimes abuse notation and simply write $\mathbf x=(X_1,\ldots,X_k)$  by regarding a vector $\mathbf x$ as a subpartition of $V$. 

%--------------------------------------------------
\subsection{$k$-Multilinear extension}

The multilinear extension is a key concept in submodular maximization, providing a continuous function that returns the expected value of a submodular function when each element is included independently according to its corresponding probability. Algorithms based on multilinear extension often achieve better approximation ratios for certain submodular problems compared to combinatorial approaches \cite{ChekuriVZ10,DBLP:conf/stoc/Vondrak08}. We consider the multilinear extension of $k$-submodular functions and refer readers to \cite{zhou2025iclr} for more details. Notably, the multilinear extension is used solely for analysis purposes, while our algorithms are purely combinatorial. 

Define $\Delta_k = \left\{ \mathbf{p} \in [0,1]^k : \sum_{j=1}^k p_j \leq 1 \right\}$ to represent probability distributions on discrete values $\{0, 1, \dots, k\}$. Specifically, a vector $\mathbf{p} \in \Delta_k$ represents a probability distribution, where each $j \in [k]$ is assigned  probability $p_j$, and $0$ is assigned  probability $1-\sum_{j=1}^kp_j$. The domain of the $k$-multilinear extension of a $k$-submodular function is  the cube $\Delta_k^n := \left\{ \mathbf{x} \in [0,1]^{nk} : \sum_{j=1}^k \mathbf x_{i,j} \leq 1, \, \forall i \in [n] \right\}$.  
%Note that $\Delta_k^n$ can be viewed as a partition matroid polytope with rank $n$.

\begin{definition}[$k$-multilinear extension]
    Given  a $k$-submodular function $f : (k+1)^V \to \mathbb{R}_{\geq 0}$, the \textit{$k$-multilinear extension}   
$F : \Delta_k^n \to \mathbb{R}_{\geq 0}$ is defined as  
\[F(\mathbf x) = \sum_{\mathbf s \in (k+1)^V} f(\mathbf s) \prod_{i\in [n]:\mathbf s_i \neq 0} \mathbf x_{i,\mathbf s_i} \prod_{i\in [n]:\mathbf s_i= 0} \left( 1 - \sum_{j=1}^{k}\mathbf x_{i,j} \right).   \]
\end{definition}

For every $\mathbf{x} \in \Delta_k^n$, it holds that $F(\mathbf{x}) = \mathbb{E}[f(\mathbf{s})]$ where $\mathbf{s} \in (k+1)^V$ denotes a random vector: for each element $e_i \in V$, $s_i = j$ for $j \in [k]$ with a probability $\mathbf x_{i,j}$, and $s_i = 0$ otherwise, which occurs independently across all elements.

%For every $\mathbf x\in\Delta_k^n$, it holds that $F(\mathbf x)=\mathbb E[f(\mathbf s)]$ where $\mathbf s\in(k+1)^V$ denotes the random solution where each element $e$ is assigned to the $j$-th component with probability $\mathbf x_{e,j}$.

We need some properties of the $k$-multilinear extension $F$ (see proofs in \cite{zhou2025iclr}). First,  if $f$ is monotone, then $F$ is also monotone (i.e., $\frac{\partial F(\mathbf x)}{\partial \mathbf{x}_{i,j}}\ge 0$ for all $e_i\in V,j\in[k]$). Second, $F$ is \emph{pairwise monotone}: for all $e_i\in V$ and $j_1,j_2\in[k]$, \[\frac{\partial F(\mathbf x)}{\partial \mathbf{x}_{i,j_1}} + \frac{\partial F(\mathbf x)}{\partial \mathbf{x}_{i,j_2}} \geq 0.\]
Third,  $F$ is \emph{multilinear}, i.e., the partial derivative is constant when fixing other coordinates:
\begin{align*}  
\frac{\partial F(\mathbf x)}{\partial\mathbf x_{i,j}} =   
& \sum_{\mathbf s \in (k+1)^V, \, \mathbf s_i = j}   
f(\mathbf s) \prod_{t \in [n] \setminus \{i\} :\mathbf s_t \neq 0} \mathbf x_{t,\mathbf s_t}   
\prod_{t \in [n]:\mathbf s_t = 0} \left( 1 - \sum_{l=1}^k \mathbf x_{t,l} \right) \\
& - \sum_{\mathbf s \in (k+1)^V, \,\mathbf  s_i = 0}   
f(\mathbf s) \prod_{t \in [n]:\mathbf s_t \neq 0} \mathbf x_{t,\mathbf s_t}   
\prod_{t \in [n] \setminus \{i\} :\mathbf s_t = 0} \left( 1 - \sum_{l=1}^k\mathbf x_{t,l} \right)\\
=& ~ \mathbb E[f(\mathbf s)|\mathbf s_i=j]-\mathbb E[f(\mathbf s)|\mathbf s_i=0],
\end{align*}  
where both terms are independent of $\mathbf x_{i,j}$. Finally, $F$ has \emph{element-wise non-positive Hessian:} 
\[  
\frac{\partial^2 F(\mathbf x)}{\partial \mathbf x_{i,j} \partial\mathbf x_{i',j'}}  
\begin{cases}  
= 0 & \text{if } i = {i'}, \\
\le 0 & \text{if } i \neq {i'},  
\end{cases}  
\]  where the first case immediately  implies the multilinearity. %This property corresponds to the submodularity  in the multilinear extension of submodular functions.
This property further implies
\begin{equation}\label{eq:submo}
    \frac{\partial F(\mathbf y)}{\partial\mathbf x_{i,j}}\ge \frac{\partial F(\mathbf x)}{\partial\mathbf x_{i,j}}, ~\forall \mathbf y\preceq\mathbf x.
\end{equation}

%%%%%%%%%%%%%%%%%%%%%%%%%%%%%%%%%%%%%%%%%%%%%%%%%%
\section{$1/2$-Approximation for Monotone $k$SKM}\label{sec:mono}

In the $k$-submodular knapsack maximization problem,  each element $e\in V$ is associated with a non-negative cost $c(e)\in\mathbb Z$, and the cost of any $\mathbf x\in(k+1)^V$ is denoted by $c(\mathbf x)=\sum_{e\in \text{supp}(\mathbf x)}c(e)$.  Given a budget $B\in\mathbb Z$, the $k$SKM problem asks to 
\begin{align}
    &\text{Maximize}~f(\mathbf x)\label{eq:knap}\\
    &\text{s.t.}~~c(\mathbf x)\le B. \nonumber
\end{align}
We consider the monotone $k$SKM in this section, where $f$ is a monotone $k$-submodular function.

We describe the \textsc{$q$-Guess Greedy} in Algorithm \ref{alg:enugreedy}. Line 1 enumerates all feasible solutions of size less than $q$ and stores the best one, and it takes $O(n^{q-1}k^{q-1})$ queries to the value oracle. In Line 2-12, for every guess $\mathbf y$ of exactly size $q$, as long as there are remaining elements, we iteratively and greedily select a pair of element $e$ and component $j$ such that the marginal density (the ratio of marginal gain to element cost) is maximized, and add it to the solution $\mathbf s$ if the budget allows. Hence, every guess $\mathbf y$ derives a greedy solution, and the algorithm returns the best one.  Note that each guess takes at most $n$ iterations, and each iteration takes at most $nk$ queries. Since there are $O(n^qk^q)$ guesses in total, the query complexity of \textsc{$q$-Guess Greedy} is $O(n^{q+2}k^{q+1})$.

\begin{algorithm}[H]
	\caption{\hspace{-2pt}~{\textsc{$q$-Guess Greedy}}}
	\label{alg:enugreedy}
	\begin{algorithmic}[1]
	\REQUIRE $V$, $f$, $c$, $B$, enumeration size $q\in\mathbb Z$
	\ENSURE  $\mathbf s^{(g)}\in(k+1)^V$
	\STATE    $\mathbf s^{(g)}\leftarrow\arg\max_{\mathbf x}f(\mathbf x)$ subject to $|\text{supp}(\mathbf x)|<q$ and $c(\mathbf x)\le B$.
	\FOR{ every $\mathbf y$ with $|\text{supp}(\mathbf y)|=q$ and $c(\mathbf y)\le B$}
    \STATE $V^0\leftarrow V\setminus \text{supp}(\mathbf y)$, $\mathbf s\leftarrow \mathbf y$
    \WHILE{  $V^0\neq \varnothing$ }
    \STATE $(e,j)\leftarrow \arg\max\limits_{e\in V^0,j\in[k]}\frac{\Delta_{e,j}(\mathbf s)}{c(e)}$
    \IF {$c(\mathbf s)+c(e)\le B$}
    \STATE $\mathbf s_e\leftarrow j$
    \ENDIF
    \STATE $V^0\leftarrow V^0\setminus\{e\}$
    \ENDWHILE
\STATE $\mathbf s^{(g)}\leftarrow \mathbf s$ if $f(\mathbf s)>f(\mathbf s^{(g)})$.
	\ENDFOR
\RETURN $\mathbf s^{(g)}$
	\end{algorithmic}
\end{algorithm}

When $q=0$, we refer to \textsc{$q$-Guess Greedy} simply as the greedy algorithm, or \textsc{Greedy} for short. It is well known that \textsc{Greedy} does not guarantee bounded performance for knapsack problems, even if the objective is linear. Nevertheless, the next two lemmas show that if a greedy solution incurs a large enough cost before it first rejects any element in the optimal solution, it still achieves good performance.

Formally, let $\mathbf o$ be any feasible (not necessarily optimal) solution of \eqref{eq:knap}. Consider running \textsc{Greedy}, and let $\mathbf s$ be the (partial) greedy solution at the point when the algorithm ``rejects'' an element $e'\in\text{supp}(\mathbf o)$ due to the budget constraint for the first time. That is, 
\[(e',j')\in \arg\max\limits_{e\in V^0,j\in[k]}\frac{\Delta_{e,j}(\mathbf s)}{c(e)}~\text{~~and~~}~c(\mathbf s)+c(e')> B.\]

 We first show that $2f(\mathbf s)\ge f(\mathbf o)$ when $\mathbf s$ and $\mathbf o$ have exactly the same cost. The proof constructs a continuous transformation from $\mathbf o$ to $\mathbf s$, where each time $t\in[0,c(\mathbf o)]$ in this process corresponds to a \emph{fractional} solution $\mathbf x(t)\in\Delta_k^n$ with value $F(\mathbf x(t))$. %and the transformation rate for each element $e$ is $\frac{1}{c(e)}$.  
 By showing that for any time $t$, the negative rate of change of value $F(\mathbf x(t))$ never exceeds the rate at which the greedy solution $\mathbf s$'s value increases, and integrating this relationship from $t=0$ to $t=c(\mathbf o)$, we obtain $f(\mathbf o)-f(\mathbf s)=F(\mathbf x(0))-F(\mathbf x(c(\mathbf o)))\le f(\mathbf s)-f(\mathbf 0)=f(\mathbf s)$.

\begin{lemma}[Key lemma]\label{lem:csco}
    If $c(\mathbf s)=c(\mathbf o)$, then $2f(\mathbf s)\ge f(\mathbf o)$.
\end{lemma}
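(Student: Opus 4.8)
The plan is to turn the continuous transformation of the proof overview into a single pointwise rate inequality. I parametrize a path $\mathbf{x}\colon[0,c(\mathbf{o})]\to\Delta_k^n$ in the domain of the multilinear extension, starting at the one-hot encoding $\mathbf{x}(0)=\mathbf{o}$ (so $F(\mathbf{x}(0))=f(\mathbf{o})$) and ending at $\mathbf{x}(c(\mathbf{o}))=\mathbf{s}$ (so $F(\mathbf{x}(c(\mathbf{o})))=f(\mathbf{s})$). Writing greedy's additions in order as $(e_1,j_1),\dots,(e_m,j_m)$ and letting $\mathbf{s}^{(i-1)}$ be the partial greedy solution just before the $i$-th addition, I lay greedy's value growth on the same time axis: the addition of $e_i$ occupies an interval $I_i$ of length $c(e_i)$ during which the greedy rate equals the density $d_i=\Delta_{e_i,j_i}(\mathbf{s}^{(i-1)})/c(e_i)$. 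Since these intervals tile $[0,c(\mathbf{s})]=[0,c(\mathbf{o})]$, telescoping gives $\int_0^{c(\mathbf{o})}d_{i(t)}\,dt=\sum_i\Delta_{e_i,j_i}(\mathbf{s}^{(i-1)})=f(\mathbf{s})-f(\mathbf{0})=f(\mathbf{s})$, where $i(t)$ is the greedy step active at time $t$.

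Given such a path, I bound $-\tfrac{d}{dt}F(\mathbf{x}(t))$. By the chain rule together with the monotonicity of $F$, the coordinates whose mass is increasing contribute nonnegatively and can be dropped, so only the coordinate $(o_l,a)$ whose mass is currently being removed survives ($a$ being the $\mathbf{o}$-component of $o_l$, with removal rate $1/c(o_l)$):
\[
-\frac{d}{dt}F(\mathbf{x}(t))\le\frac{1}{c(o_l)}\,\frac{\partial F(\mathbf{x}(t))}{\partial\mathbf{x}_{o_l,a}}.
\]
If the schedule ensures $\mathbf{s}^{(i-1)}\preceq\mathbf{x}(t)$ and that $o_l$ is still available to greedy at the active step $i=i(t)$, then \eqref{eq:submo} and the fact that a partial derivative of $F$ at an integral point equals the matching marginal gain yield
\[
\frac{\partial F(\mathbf{x}(t))}{\partial\mathbf{x}_{o_l,a}}\le\frac{\partial F(\mathbf{s}^{(i-1)})}{\partial\mathbf{x}_{o_l,a}}=\Delta_{o_l,a}(\mathbf{s}^{(i-1)}),
\]
while greedy's maximum-density choice gives $\Delta_{o_l,a}(\mathbf{s}^{(i-1)})/c(o_l)\le d_i$. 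Thus $-\tfrac{d}{dt}F(\mathbf{x}(t))\le d_{i(t)}$ pointwise, and integrating over $[0,c(\mathbf{o})]$ gives $f(\mathbf{o})-f(\mathbf{s})\le f(\mathbf{s})$, i.e.\ $2f(\mathbf{s})\ge f(\mathbf{o})$.

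The remaining, and I expect hardest, task is to construct a schedule that keeps $\mathbf{x}(t)\in\Delta_k^n$, maintains $\mathbf{s}^{(i-1)}\preceq\mathbf{x}(t)$ at the active step, and removes each $\mathbf{o}$-element only while it is still available to greedy. I would synchronize the additions to $\mathbf{x}$ with greedy itself (raising $\mathbf{x}_{e_i,j_i}$ over $I_i$), so that $\mathbf{s}^{(i-1)}$ is fully present whenever step $i$ is active, and then split $\text{supp}(\mathbf{o})$ by how it meets $\text{supp}(\mathbf{s})$: an element shared with the same component needs no change; an element shared with a different component is moved out of its $\mathbf{o}$-component exactly during the interval in which greedy re-adds it (availability holds there because greedy compared both components at that step); and a purely-$\mathbf{o}$ element—including the rejected $e'$, which is available at every step up to its rejection—is removed during the intervals in which greedy adds purely-$\mathbf{s}$ elements. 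The decisive use of the hypothesis $c(\mathbf{s})=c(\mathbf{o})$ is that it forces the total cost of the purely-$\mathbf{o}$ elements to equal that of the purely-$\mathbf{s}$ elements, so these removals fit exactly into the addition-only intervals where the discarded positive derivative terms leave room; verifying this feasibility (the deadline and availability bookkeeping, with at most one removal active at a time) is the main obstacle of the proof.
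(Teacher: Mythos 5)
Your proposal is correct and follows essentially the same route as the paper: a continuous transformation from $\mathbf{o}$ to $\mathbf{s}$ evaluated through the multilinear extension, where monotonicity discards the increasing coordinates, property \eqref{eq:submo} lifts the surviving derivative to the integral partial greedy solution $\mathbf{s}^{(i-1)}$ (the paper's $\lfloor\mathbf{s}\rfloor(t)$), and the max-density rule plus multilinearity caps it by the greedy rate before integrating. Your explicit schedule—moving shared-but-relocated elements during their own greedy interval and removing purely-$\mathbf{o}$ elements (whose total cost equals that of purely-$\mathbf{s}$ elements when $c(\mathbf{s})=c(\mathbf{o})$) during the purely-$\mathbf{s}$ intervals—is a concrete and valid instantiation of the paper's ordering constraint, and your availability bookkeeping, which the paper leaves implicit, goes through because no element of $\operatorname{supp}(\mathbf{o})$ is rejected before the process stops.
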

\begin{proof}
    Let vector $\mathbf s(t)\in\Delta_k^n$ denote the (fractional) solution generated by the greedy process at time $t\in[0,c(\mathbf o)]$.   The greedy process starts from $\mathbf s(0)=\mathbf 0$ and ends at $\mathbf s(c(\mathbf o))=\mathbf s$ (we regard $\mathbf s$ as a vector in $\Delta_k^n$ with some abuse of notations). When the pair $(e_i,j)$ is selected by the greedy algorithm,  the corresponding entry $\mathbf s_{ij}(t)$ increases at rate $\frac{1}{c(e_i)}$. Therefore, it takes  time $c(e_i)$ for $\mathbf s_{ij}(t)$ to increase from 0 to 1. Once an entry reaches 1,  the greedy process proceeds to the next pair, repeating this operation until the total time $c(\mathbf o)$  is exhausted.
    
    Let $F'(\mathbf s(t))$ be the derivative with respect to $t$. Note that $F'(\mathbf s(t))$ may not be defined at the breakpoints—those times when the process switches from one element to another—but this does not affect our analysis, as we consider the derivative only at non-breakpoints. By the fundamental theorem of calculus, 
    \begin{equation}\label{eq:basic}
        f(\mathbf s)-f(\mathbf 0)=F(\mathbf s(c(\mathbf o)))-F(\mathbf s(0))=\int_0^{c(\mathbf o)}F'(\mathbf s(t))dt. 
    \end{equation}

   Next, define another vector $\mathbf o(t)\in\Delta_k^n$ for $t\in[0,c(\mathbf o)]$, such that $\mathbf o(0)=\mathbf o$ and $\mathbf o(c(\mathbf o))=\mathbf{0}$. This process represents gradually decreasing the entries of $\mathbf o$ to zero, one element at a time. For each pair $(e_i,j)$ in $\mathbf o$,  $\mathbf o_{ij}(t)$ decreases at rate $\frac{1}{c(e_i)}$. Once it reaches zero, the process proceeds to the next pair in $\mathbf o$, repeating this operation until the total time $c(\mathbf o)$ is exhausted.
The only restriction on the order in which elements in $\mathbf o$ are considered is that, the time when an element $e_i$
  starts decreasing in $\mathbf o(t)$ must be no later than the time it starts increasing in $\mathbf s(t)$, even if $e_i$ belongs to different components in $\mathbf o(t)$ and $\mathbf s(t)$).
This ensures that the combined vector  $\mathbf o(t)+\mathbf s(t)$ remains in $\Delta_k^n$ for all $t$.

  We can now formalize the continuous transformation from  $\mathbf o$ to $\mathbf s$ by setting 
  \begin{equation}\label{eq:aux}
      \mathbf x(t)=\mathbf o(t)+\mathbf s(t)
  \end{equation} for $t\in[0,c(\mathbf o)]$. At the endpoints, $\mathbf x(0)=\mathbf o$ and $\mathbf x(c(\mathbf o))=\mathbf s$, so
    \begin{equation}\label{eq:yui}
        f(\mathbf s)-f(\mathbf o)=F(\mathbf x(c(\mathbf o)))-F(\mathbf x(0))=\int_0^{c(\mathbf o)}F'(\mathbf x(t))dt.
    \end{equation}
     Moreover, define $\mathbf y(t)=\mathbf o(t)+\lfloor\mathbf s\rfloor(t)$, where $\lfloor\mathbf s\rfloor(t)$ denotes the vector obtained by rounding every non-one entry of $\mathbf s(t)$ down to zero.  In this construction, while $\mathbf o(t)$ changes continuously, $\lfloor\mathbf s\rfloor(t)$ changes discretely: $\lfloor\mathbf s\rfloor_{ij}(t)$ jumps from 0 to 1 at the moment when $(e_i,j)$ has been considered for a total time $c(e_i)$. As $\mathbf y(0)=\mathbf o$ and $\mathbf y(c(\mathbf o))=\mathbf s$, $\mathbf y(t)$ also represents a transformation from $\mathbf o$ to $\mathbf s$.

   Note that $\mathbf y(t)\preceq \mathbf x(t)$, and they differ in at most one entry at any time $t$.  We claim that for all non-breakpoint $t$,  
     \begin{equation}\label{eq:tosee}
        F'(\mathbf x(t))\ge F'(\mathbf y(t))\le 0.
     \end{equation}
To see \eqref{eq:tosee}, consider the updates at time $t$: suppose $\mathbf x_{ij}(t)$ and $\mathbf y_{ij}(t)$ is decreasing, and $\mathbf x_{i'j'}(t)$ is increasing, with other entries unchanged. Then
\begin{align*}
    &F'(\mathbf x(t))=\nabla F\cdot \mathbf x'(t)=\frac{\partial F(\mathbf x(t))}{\partial \mathbf x_{i'j'}}\frac{1}{c(e_{i'})}-\frac{\partial F(\mathbf x(t))}{\partial \mathbf x_{ij}}\frac{1}{c(e_i)}~\text{~and~}~\\
    &F'(\mathbf y(t))=\nabla F\cdot \mathbf y'(t)=-\frac{\partial F(\mathbf y(t))}{\partial \mathbf x_{ij}}\frac{1}{c(e_i)}.
\end{align*}  
Since $\mathbf y(t)\preceq \mathbf x(t)$, by the property \eqref{eq:submo}, we have 
\[\frac{\partial F(\mathbf y(t))}{\partial \mathbf x_{ij}}\frac{1}{c(e_i)}\ge \frac{\partial F(\mathbf x(t))}{\partial \mathbf x_{ij}}\frac{1}{c(e_i)}.\]
Further by the monotonicity of $F$, we obtain \eqref{eq:tosee}. 
     
  Now we show that the negative rate of change of value $F(\mathbf y(t))$ never exceeds the rate at
which the greedy value $F(\mathbf s(t))$ increases. Assume that, at non-breakpoint $t$,  $\mathbf y_{ij}(t)$ is decreasing and $\mathbf s_{i'j'}(t)$ is increasing, with other entries unchanged. Then,
     \begin{align*}
         -F'(\mathbf y(t))=\frac{\partial F(\mathbf y(t))}{\partial \mathbf x_{ij}}\frac{1}{c(e_i)}&\le   \frac{\partial F(\lfloor\mathbf s\rfloor(t))}{\partial \mathbf x_{ij}}\frac{1}{c(e_i)}\\
         &\le\frac{\partial F(\lfloor\mathbf s\rfloor(t))}{\partial \mathbf x_{i'j'}}\frac{1}{c(e_{i'})}\\
         &=\frac{\partial F(\mathbf s(t))}{\partial \mathbf x_{i'j'}}\frac{1}{c(e_{i'})}=F'(\mathbf s(t)). 
     \end{align*}
     The first inequality is due to the fact $\lfloor\mathbf s\rfloor(t)\preceq \mathbf y(t)$ and the property \eqref{eq:submo}.  The second inequality follows because the greedy algorithm always selects the pair $(e_{i'},j')$ with the maximum marginal density, and $\frac{\partial F(\lfloor\mathbf s\rfloor(t))}{\partial \mathbf x_{i'j'}}$ is the marginal gain of $(e_{i'},j')$ by the definition of derivative. The second last equality is due to the multilinearity. 
   
     Therefore, combining the above with \eqref{eq:basic}, \eqref{eq:yui} and \eqref{eq:tosee}, we obtain
     \begin{align*}
         f(\mathbf o)-f(\mathbf s)=\int_0^{c(\mathbf o)}-F'(\mathbf x(t))dt&\le \int_0^{c(\mathbf o)}-F'(\mathbf y(t))dt\\
         &\le \int_0^{c(\mathbf o)}F'(\mathbf s(t))dt\\
         &=f(\mathbf s),
     \end{align*}
     establishing the proof. 
\end{proof}

We note that, in addition to our new approach of constructing an auxiliary point $\mathbf x(t) =\mathbf s(t) +\mathbf o(t)$ in \eqref{eq:aux} for analyzing the combinatorial algorithm, there are some other constructions in existing analysis of continuous algorithms. Zhou et al.~\cite{zhou2025iclr} use the auxiliary point $\mathbf s(t) + (1 - t)\mathbf o$ to prove a $(\frac12-\epsilon)$-approximation of the Frank-Wolfe-type continuous greedy algorithm for a very general class of constraints (including multiple knapsack constraints). Also in  an early version of \cite{zhou2025iclr}, there is a more similar construction of auxiliary point to ours for analyzing the continuous greedy algorithm under a single knapsack constraint. 

%We note that compared to used by , we introduce  a new auxiliary point . %where s(t) and o(t) are increasing or decreasing processes along knapsack constraint constructed from the current solution s and the optimal solution o, respectively. 

The bound in Lemma \ref{lem:csco}  can be extended to arbitrary cost of $\mathbf s$. 

\begin{lemma}\label{lem:csf}
    If $c(\mathbf s)= \beta\cdot c(\mathbf o)$, then $f(\mathbf s)\ge \min\{\frac{\beta}{2},\frac12\}\cdot f(\mathbf o)$.
\end{lemma}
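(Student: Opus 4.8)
The plan is to reduce both regimes to a single application of the continuous-transformation argument of \cref{lem:csco}, run up to cost exactly $c(\mathbf o)$, and then transfer the resulting $\tfrac12$-bound to $f(\mathbf s)$ using monotonicity (when $\beta\ge 1$) or concavity of the greedy value curve (when $\beta\le 1$). Let $\mathbf s(t)\in\Delta_k^n$ denote the greedy trajectory as in \cref{lem:csco}, where the entry of the currently selected pair $(e_i,j)$ grows at rate $1/c(e_i)$, so that $t$ measures accumulated cost. I first claim that $F(\mathbf s(c(\mathbf o)))\ge \tfrac12 f(\mathbf o)$, i.e.\ the greedy value at the moment its cost reaches $c(\mathbf o)$ is at least half of $f(\mathbf o)$. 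This is exactly the conclusion of the proof of \cref{lem:csco} with the integration interval taken to be $[0,c(\mathbf o)]$: decrease $\mathbf o$ over $[0,c(\mathbf o)]$ and grow the greedy trajectory over the same interval (both of total length $c(\mathbf o)$), form $\mathbf x(t)=\mathbf o(t)+\mathbf s(t)$ and $\mathbf y(t)=\mathbf o(t)+\lfloor\mathbf s\rfloor(t)$, and use $F'(\mathbf x(t))\ge F'(\mathbf y(t))$ together with $-F'(\mathbf y(t))\le F'(\mathbf s(t))$ verbatim. The one point to check is that the density comparison $-F'(\mathbf y(t))\le F'(\mathbf s(t))$ stays valid, which holds because no element of $\supp(\mathbf o)$ is rejected on $[0,c(\mathbf o)]$: the genuine greedy rejects no $\mathbf o$-element before reaching $\mathbf s$ (by the definition of $\mathbf s$ as the first such rejection), and this already covers $[0,c(\mathbf o)]$ when $\beta\ge1$; when $\beta\le1$ the extended process defined below ignores the budget and so never rejects, in particular it adds $e'$ rather than rejecting it. In either case every $\mathbf o$-element being decreased in $\mathbf y(t)$ is still available to the greedy step, hence has no larger marginal density.

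For the case $\beta\ge 1$ (i.e.\ $c(\mathbf s)\ge c(\mathbf o)$), the genuine greedy trajectory already reaches cost $c(\mathbf o)$ before it stops, so $\mathbf s(c(\mathbf o))$ is a prefix of $\mathbf s$, i.e.\ $\mathbf s(c(\mathbf o))\preceq\mathbf s$ coordinate-wise. Monotonicity of $F$ then gives $f(\mathbf s)=F(\mathbf s)\ge F(\mathbf s(c(\mathbf o)))\ge \tfrac12 f(\mathbf o)=\min\{\tfrac\beta2,\tfrac12\}f(\mathbf o)$, as desired.

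For the case $\beta\le 1$ (i.e.\ $c(\mathbf s)< c(\mathbf o)$), the greedy stops at $\mathbf s$ with cost $c(\mathbf s)=\beta\,c(\mathbf o)$, so to define $\mathbf s(c(\mathbf o))$ I extend the process past the budget: continue selecting the maximum-density pair over the elements still \emph{available} at $\mathbf s$ (ignoring the budget, but without re-admitting elements previously rejected), until the accumulated cost reaches $c(\mathbf o)$; this is possible because $\supp(\mathbf o)$ is entirely added-or-available at $\mathbf s$, so at least $c(\mathbf o)$ worth of cost is reachable. For this extended trajectory the claim above yields $F(\mathbf s(c(\mathbf o)))\ge\tfrac12 f(\mathbf o)$. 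Now set $g(t)=F(\mathbf s(t))$ on $[0,c(\mathbf o)]$: it is piecewise linear with $g(0)=0$, and its slopes are the successive greedy marginal densities, which are non-increasing by orthant submodularity \eqref{eq:submo}. Hence $g$ is concave, and concavity with $g(0)=0$ gives $g(\beta\,c(\mathbf o))\ge \beta\,g(c(\mathbf o))$. Since $g(\beta\,c(\mathbf o))=g(c(\mathbf s))=f(\mathbf s)$ and $g(c(\mathbf o))=F(\mathbf s(c(\mathbf o)))\ge\tfrac12 f(\mathbf o)$, we conclude $f(\mathbf s)\ge \tfrac\beta2 f(\mathbf o)=\min\{\tfrac\beta2,\tfrac12\}f(\mathbf o)$.

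The main obstacle is the concavity step in the case $\beta\le1$, specifically justifying that the greedy marginal densities stay non-increasing across the boundary $t=c(\mathbf s)$ where the process switches to the budget-ignoring extension. The subtle point is that the first extended pick is typically the rejected element $e'$, which carried the \emph{maximum} available density at $\mathbf s$; one must verify its density does not exceed the slope just before $c(\mathbf s)$. This holds because the densities of processed pairs are non-increasing along the greedy order and $e'$ is processed only after the last genuine addition, so its density is bounded by that of the last added pair, and the same reasoning bounds every subsequent extended pick. It is also here that re-admitting previously budget-rejected elements must be avoided, since such an element could carry a high density (it may have been rejected early) and would spuriously break the monotonicity of the slopes. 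Pinning down these density orderings, together with the feasibility of the combined trajectory $\mathbf x(t)\in\Delta_k^n$ under the prescribed ordering of element removals, is where the careful work lies, although each piece follows the template already established in \cref{lem:csco}.
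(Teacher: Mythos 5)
Your proof is correct, and while it rests on the same foundation as the paper's (both cases reduce to \cref{lem:csco} plus the non-increasing-density property of greedy), the implementation is genuinely different. The paper handles $\beta\le 1$ by a discrete instance modification: it replaces the last greedy element $s_l$ by a hypothetical stretched element of cost $c(s_l)+c(\mathbf o)-c(\mathbf s)$ with the same density, so that the modified greedy solution $\mathbf s'$ has cost exactly $c(\mathbf o)$, then applies \cref{lem:csco} to $\mathbf s'$ and uses $f(\mathbf s)\ge \beta f(\mathbf s')$; for $\beta>1$ it pads $\mathbf o$ with zero-value dummy elements to equalize costs and applies \cref{lem:csco} directly. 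You instead modify the \emph{trajectory} rather than the instance: truncating the genuine greedy at cost $c(\mathbf o)$ when $\beta\ge1$ (then transferring by monotonicity of $F$), and extending it past the budget when $\beta\le1$ (then transferring by concavity of the piecewise-linear value curve $g(t)=F(\mathbf s(t))$, i.e.\ $g(\beta c(\mathbf o))\ge \beta g(c(\mathbf o))$ since $g(0)=0$). What your route buys is that it avoids the paper's somewhat informal hypothetical-element constructions (which strictly speaking require checking that \cref{lem:csco} still applies in the modified instance); the price is that you must re-verify the continuous argument on the truncated/extended trajectories, and you correctly pin down the points where this could fail: availability of every $\supp(\mathbf o)$-element throughout $[0,c(\mathbf o)]$ (guaranteed since $\mathbf s$ is the \emph{first} rejection of an $\mathbf o$-element and the extension never rejects), the slope comparison across the boundary $t=c(\mathbf s)$ (the rejected element $e'$ was available at the last genuine addition, so greedy choice plus property \eqref{eq:submo} bounds its density), the exclusion of previously rejected non-$\mathbf o$ elements from the extension (which could otherwise break concavity), and reachability of cost $c(\mathbf o)$ using $\supp(\mathbf o)$-elements alone. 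Both proofs are valid; yours is longer but more self-contained, the paper's is shorter but leans on "preserving all marginal densities" as a black box.
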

\begin{proof}
If $\beta\le 1$, let $s_l$ be the last element added to $\mathbf s$. We construct a new solution $\mathbf s'$ by replacing $s_l$ in $\mathbf s$ with an element $s_l'$ such that its cost is $c(s_l')=c(s_l)+c(\mathbf o)- c(\mathbf s)$, while preserving all marginal densities. The total cost now satisfies  $c(\mathbf s')=c(\mathbf o)$. Because the elements selected by the greedy algorithm has non-increasing marginal densities, we have $\frac{f(\mathbf s)}{c(\mathbf s)}\ge \frac{f(\mathbf s')}{c(\mathbf s')}$, and thus $f(\mathbf s)\ge \beta\cdot f(\mathbf s')$. Then, applying Lemma \ref{lem:csco}, we obtain $2f(\mathbf s')\ge f(\mathbf o)$, which leads to $f(\mathbf s)\ge \frac{\beta}{2}f(\mathbf o)$. 

If $\beta>1$,  we construct a new solution $\mathbf o'$ by adding some zero-value elements to $\mathbf o$ such that $c(\mathbf o')=c(\mathbf s)$ and $f(\mathbf o)=f(\mathbf o')$.  %consider a partial solution $\mathbf s'$ of $\mathbf s$ so that $c(\mathbf s')=c(\mathbf o)$, i.e., the greedy procedure stops at time $c(\mathbf o)$. Applying  Lemma \ref{lem:csco} leads to $f(\mathbf s)\ge F(\mathbf s')\ge \frac{1}{2}f(\mathbf o)$. 
Applying  Lemma \ref{lem:csco} leads to $f(\mathbf s)\ge \frac12f(\mathbf o')=\frac12f(\mathbf o)$. 
\end{proof}

Now we are ready to show that the greedy algorithm equipped with a single guess can achieve a $\frac12$-approximation. 

\begin{theorem}\label{thm:12}
    \textsc{$1$-Guess Greedy} is $\frac12$-approximation for monotone $k$-submodular knapsack, in $O(n^3k^2)$ time. 
\end{theorem}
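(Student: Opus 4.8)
The plan is to lower-bound the returned value $f(\mathbf s^{(g)})$ by the greedy run from a single well-chosen singleton guess and to reduce everything to \cref{lem:csco,lem:csf}. Fix an optimal solution $\mathbf o$, run plain \textsc{Greedy}, and let $\mathbf s$ be its partial solution at the first instant it rejects an element $e'\in\text{supp}(\mathbf o)$, which $\mathbf o$ places in component $j^o$; the rejection condition reads $c(\mathbf s)+c(e')>B\ge c(\mathbf o)$, so $c(\mathbf s)>c(\mathbf o)-c(e')$. I will use two facts about \textsc{1-Guess Greedy}: (i) guessing the first pair selected by plain \textsc{Greedy} reproduces the entire plain run, so $f(\mathbf s^{(g)})\ge f(\mathbf s)$; and (ii) for any pair $(e,j)$ with $c(e)\le B$, the run guessing $(e,j)$ returns at least $f(\{(e,j)\})$ by monotonicity.

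First I would dispose of the easy regime $c(\mathbf s)\ge c(\mathbf o)$, which also subsumes the case where \textsc{Greedy} never rejects an $\mathbf o$-element (take $\mathbf s$ to be the full greedy output; then $\text{supp}(\mathbf o)\subseteq\text{supp}(\mathbf s)$ forces $c(\mathbf s)\ge c(\mathbf o)$). In this regime \cref{lem:csf} gives $f(\mathbf s)\ge\frac12 f(\mathbf o)$, whence $f(\mathbf s^{(g)})\ge\frac12 f(\mathbf o)$ by (i).

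It remains to treat $c(\mathbf s)<c(\mathbf o)$, where \cref{lem:csf} only yields $f(\mathbf s)\ge\frac{\beta}{2}f(\mathbf o)$ with $\beta=c(\mathbf s)/c(\mathbf o)<1$. Here I would guess $e'$ pinned to its optimal component $j^o$ and analyze that run through the monotone $k$-submodular contraction $f_{e'}(\cdot):=f(\cdot\sqcup(e',j^o))-A$, where $A:=f(\{(e',j^o)\})$; pinning to $j^o$ rather than to \textsc{Greedy}'s preferred component is exactly what keeps the transformed point $\mathbf o(t)+\mathbf s(t)$ inside $\Delta_k^n$, as in \cref{lem:csco}. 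Let $\tilde{\mathbf s}$ be the contracted greedy solution at its first rejection of an element $e''\in\text{supp}(\mathbf o)\setminus\{e'\}$, and let $\rho''$ be the maximum marginal density at that moment, so that $f(\mathbf s^{(g)})\ge A+f_{e'}(\tilde{\mathbf s})$. Re-running the continuous transformation of \cref{lem:csco} on $f_{e'}$ between $\mathbf o\setminus e'$ and $\tilde{\mathbf s}$ would give the threshold estimate $f(\mathbf o)=A+f_{e'}(\mathbf o\setminus e')\le A+f_{e'}(\tilde{\mathbf s})+\rho''\,c(\mathbf o\setminus e')$. Combining the density facts $f_{e'}(\tilde{\mathbf s})\ge\rho''\,c(\tilde{\mathbf s})$ and $A\ge\rho''\,c(e'')$ with $c(\tilde{\mathbf s})+c(e'')>c(\mathbf o\setminus e')$ yields $\rho''\,c(\mathbf o\setminus e')<A+f_{e'}(\tilde{\mathbf s})$, and therefore $f(\mathbf o)<2\big(A+f_{e'}(\tilde{\mathbf s})\big)\le 2\,f(\mathbf s^{(g)})$. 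This is the quantitative form of using the guessed singleton value $A$ to absorb the rejected element's contribution.

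The step I expect to be the main obstacle is securing the density fact $A\ge\rho''\,c(e'')$ together with the component bookkeeping. Submodularity gives $\rho''\,c(e'')\le f(\{(e'',j'')\})$, which would suffice if the guessed singleton value $A$ dominated $f(\{(e'',j'')\})$; this forces choosing the guess as the best singleton in $\text{supp}(\mathbf o)$ and reconciling the component $j''$ that \textsc{Greedy} assigns to $e''$ with the component $j^o$ that the contraction pins — a reconciliation that must be carried out through pairwise monotonicity, exactly as in the proof of \cref{lem:csco}. Re-deriving the threshold estimate continuously (rather than via a discrete exchange) is the other point needing care. Granting these, the running-time claim is immediate: \textsc{1-Guess Greedy} tries $O(nk)$ singletons, each greedy run performs at most $n$ rounds scanning $O(nk)$ pairs for $O(n^2k)$ queries, so $O(n^3k^2)$ queries in total.
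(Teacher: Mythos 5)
Your skeleton---guess a singleton consistent with the optimum, pass to the contracted function, and rerun the continuous transformation of \cref{lem:csco}---matches the paper's strategy, and both your easy regime $c(\mathbf s)\ge c(\mathbf o)$ and the $O(n^3k^2)$ query count are fine. The hard regime, however, has a genuine gap, and it is exactly the step you flagged: the density fact $A\ge\rho''\,c(e'')$ cannot be secured inside your framework. Writing $j''$ for the component greedy would assign to the rejected element $e''$, orthant submodularity gives $\rho''\,c(e'')=\Delta_{e'',j''}(\tilde{\mathbf s})\le f(\{(e'',j'')\})$, so you need $A\ge f(\{(e'',j'')\})$. But for the decomposition $f(\mathbf o)=A+f_{e'}(\mathbf o\setminus e')$ and the transformation between $\mathbf o\setminus e'$ and $\tilde{\mathbf s}$ to make sense, the guess must satisfy $\preceq\mathbf o$, so $A$ only dominates singletons placed at their \emph{$\mathbf o$-components}, whereas $f(\{(e'',j'')\})$ lives at greedy's component. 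A monotone $k$-submodular function can value the same element very differently across components (e.g.\ $f(\{(e,1)\})\gg f(\{(e,2)\})$ is consistent with orthant submodularity and pairwise monotonicity), and an optimal solution can still park $e''$ in the weak component because of interactions with other elements; then $f(\{(e'',j'')\})>A$ and, if the rejection happens early so $\tilde{\mathbf s}$ is small, $\rho''\,c(e'')$ is close to $f(\{(e'',j'')\})$. Pairwise monotonicity cannot bridge this: it only provides lower bounds on sums $\Delta_{e,j}(\mathbf x)+\Delta_{e,j'}(\mathbf x)\ge 0$, never an upper bound on one component's value by another's. Conversely, if you redefine the guess as the best singleton over \emph{all} components of elements of $\text{supp}(\mathbf o)$ (which would give $A\ge f(\{(e'',j'')\})$), the guess need no longer be $\preceq\mathbf o$ and the identity $f(\mathbf o)=A+f_{e'}(\mathbf o\setminus e')$ collapses. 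These two requirements are irreconcilable in your chain.

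The paper's proof is structured precisely to avoid ever comparing against greedy's component choice. It splits $\mathbf o^*$ into the best $\mathbf o^*$-consistent singleton $\mathbf y$, the maximum-\emph{cost} element $\mathbf z$ of $\mathbf o^*\setminus\mathbf y$, and the rest $\mathbf r$, so that $f(\mathbf o^*)=f(\mathbf y)+f_{\mathbf y}(\mathbf r)+f_{\mathbf y\sqcup\mathbf r}(\mathbf z)$. Since the rejected element $e$ satisfies $c(e)\le c(\mathbf z)$, the rejection inequality $c(\mathbf s)+c(e)>B$ yields $c(\mathbf s\setminus\mathbf y)\ge c(\mathbf r)$, so \cref{lem:csf} applies with $\beta\ge 1$: the greedy side is never frozen, there is no tail phase, and no rejection-density threshold is needed at all. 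The leftover term is absorbed by $f(\mathbf y)\ge f(\mathbf z)\ge f_{\mathbf y\sqcup\mathbf r}(\mathbf z)$, which compares only $\mathbf o^*$-consistent quantities. A secondary issue: even granting your density fact, the threshold estimate as you wrote it is not what the transformation yields---on $[0,c(\tilde{\mathbf s}))$ the loss rate is bounded by the greedy rate (as in \cref{lem:csco}), not by $\rho''$, so the correct form is $f_{e'}(\mathbf o\setminus e')\le 2f_{e'}(\tilde{\mathbf s})+\rho''\bigl(c(\mathbf o\setminus e')-c(\tilde{\mathbf s})\bigr)$. That corrected form would still close your chain via $c(\mathbf o\setminus e')-c(\tilde{\mathbf s})<c(e'')$, so the density fact remains the one load-bearing gap.
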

\begin{proof}
For any $\mathbf x\in(k+1)^V$, define the function $f_{\mathbf x}:(k+1)^{V\setminus\text{supp}(\mathbf x)}\to\mathbb R$ by $f_{\mathbf x}(\mathbf x')=f(\mathbf x'\sqcup\mathbf x)-f(\mathbf x)$. It is clear that $f_{\mathbf x}$ is also $k$-submodular.

    Let $\mathbf o^*$ be an optimal solution of $k$SKM. Let $\mathbf y\in\arg\max_{\mathbf x\preceq \mathbf o^*:|\text{supp}(\mathbf x)|=1}f(\mathbf x)$ be the singleton solution corresponding to the element in $\mathbf o^*$  that achieves the largest value. Let  $\mathbf z\in\arg\max_{\mathbf x\preceq \mathbf o^*\setminus \mathbf y:|\text{supp}(\mathbf x)|=1}c(\mathbf x)$ be the singleton solution corresponding to the remaining element in $\mathbf o^*\setminus \mathbf y$ with the largest cost. Let $\mathbf r=\mathbf o^*\setminus\mathbf y\setminus\mathbf z$ denote the collection of the remaining elements.   We have 
    \[f(\mathbf o^*)=f(\mathbf y)+f_{\mathbf y}(\mathbf r)+f_{\mathbf y\sqcup \mathbf r}(\mathbf z).\] 
    
    Suppose that \textsc{$1$-Guess Greedy} guesses $\mathbf y$ correctly, and let $\mathbf s$ be the greedy solution at the point when the algorithm first rejects an element $e\in\mathbf o^*$ due to the budget constraint. By construction, $c(\mathbf s)+c(e)>B\ge c(\mathbf o^*)$, which leads to 
    $c(\mathbf s\setminus \mathbf y)\ge c(\mathbf r)$ by the definition of $\mathbf r$. Applying Lemma \ref{lem:csf} to  $\mathbf s \setminus \mathbf y$, $\mathbf r$ and $f_{\mathbf y}$, we obtain 
    \[f_{\mathbf y}(\mathbf s\setminus \mathbf y)\ge \frac12 f_{\mathbf y}(\mathbf r).\]
    Therefore, it follows that \begin{align*}
        f(\mathbf s)&=f(\mathbf y)+f_{\mathbf y}(\mathbf s\setminus \mathbf y)
        \ge \frac12 f(\mathbf y)+\frac12 f_{\mathbf y\sqcup \mathbf r}(\mathbf z)+\frac12 f_{\mathbf y}(\mathbf r)=\frac12 f(\mathbf o^*),
    \end{align*}
    where the inequality is due to the definition of $\mathbf y$.
\end{proof}

\iffalse
Using the thresholding technique of \cite{badanidiyuru2014fast}, the time complexity of Greedy$_1$ can be reduced to $\tilde O(n^2k^2/\epsilon)$ at the cost of worsening the approximation ratio by a factor of $1-\epsilon$. We refer the reader to \cite{yaroslavtsev2020bring,FeldmanNS23} for the application of the thresholding technique to a greedy algorithm in the context of submodular knapsack. This technique can be applied to all the greedy algorithms in a similar way. 

\begin{corollary}
   \bluecomment{ There is a $(\frac12-\epsilon)$-approximation algorithm with running time $\tilde O(n^2k^2/\epsilon)$.}
\end{corollary}
\fi

The $O(n^3k^2)$ query complexity of \textsc{1-Guess Greedy} can be further reduced by a factor of $n$ using the decreasing-threshold technique from \cite{badanidiyuru2014fast}, at the cost of an additional $\epsilon$ loss in the approximation ratio. Specifically, there exists a $(\frac12-\epsilon)$-approximation algorithm for monotone $k$SKM within $\tilde O(n^2k^2)$ queries, where the $\tilde O$ notation disregards poly-logarithmic terms. 

%%%%%%%%%%%%%%%%%%%%%%%%%%%%%%%%%%%%%%%%%%%%%%%%%%%%%
\section{$1/3$-Approximation for Non-monotone $k$SKM}\label{sec:non}

For non-monotone $k$-submodular functions, we only have pairwise monotonicity rather than monotonicity. Nevertheless, the analysis of the $\frac13$-approximation  proceeds in a way analogous to the monotone case in Section~\ref{sec:mono}.

Let  $\mathbf o$ be an arbitrary feasible solution of the non-monotone $k$SKM. Consider running \textsc{Greedy}, and let $\mathbf s$ be the greedy solution at the point when the algorithm rejects an element in $\text{supp}(\mathbf o)$ due to the budget constraint for the first time. 

\begin{lemma}\label{lem:csconon}
    If $c(\mathbf s)=c(\mathbf o)$, then $3f(\mathbf s)\ge f(\mathbf o)$.
\end{lemma}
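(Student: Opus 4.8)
The plan is to mirror the continuous-transformation argument of Lemma~\ref{lem:csco}, but with the monotonicity of $F$ replaced by pairwise monotonicity, which forces the constant to degrade from $\frac12$ to $\frac13$. As before, I would let $\mathbf s(t)\in\Delta_k^n$ be the greedy process running from $\mathbf s(0)=\mathbf 0$ to $\mathbf s(c(\mathbf o))=\mathbf s$, with each selected entry increasing at rate $\frac{1}{c(e_i)}$, and let $\mathbf o(t)$ decrease $\mathbf o$ to $\mathbf 0$ at the same rates, respecting the same ordering restriction (each element must start decreasing in $\mathbf o(t)$ no later than it starts increasing in $\mathbf s(t)$) so that $\mathbf x(t)=\mathbf o(t)+\mathbf s(t)\in\Delta_k^n$ throughout. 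I would keep the companion process $\mathbf y(t)=\mathbf o(t)+\lfloor\mathbf s\rfloor(t)$ with $\mathbf y(t)\preceq\mathbf x(t)$ differing in at most one entry. The fundamental theorem of calculus then gives $f(\mathbf o)-f(\mathbf s)=\int_0^{c(\mathbf o)}-F'(\mathbf x(t))\,dt$ and $f(\mathbf s)=\int_0^{c(\mathbf o)}F'(\mathbf s(t))\,dt$ exactly as in the monotone proof.

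The crucial difference is bounding $-F'(\mathbf x(t))$. In the monotone case the inequality $F'(\mathbf x(t))\ge F'(\mathbf y(t))$ was enough because monotonicity gave $F'(\mathbf y(t))\le 0$, so the descent term could be charged entirely to the greedy gain. Without monotonicity, at a non-breakpoint $t$ where $\mathbf x_{ij}(t)$ (and $\mathbf y_{ij}(t)$) is decreasing while $\mathbf x_{i'j'}(t)$ is increasing, I would write
\begin{equation*}
-F'(\mathbf x(t))=\frac{\partial F(\mathbf x(t))}{\partial\mathbf x_{ij}}\frac{1}{c(e_i)}-\frac{\partial F(\mathbf x(t))}{\partial\mathbf x_{i'j'}}\frac{1}{c(e_{i'})}.
\end{equation*}
The second (greedy-increment) term is handled by the same chain of inequalities as in Lemma~\ref{lem:csco}, bounding it above by $F'(\mathbf s(t))$ via $\mathbf x(t)\preceq$ is not available, so I would instead bound it through $\lfloor\mathbf s\rfloor(t)$ and the greedy maximality. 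For the first (descent) term I can no longer claim it is nonpositive; instead I would invoke pairwise monotonicity. The idea is to compare the rate at which value is lost by decreasing $\mathbf x_{ij}$ against the rate at which the greedy process would gain value if it were instead to place $e_i$ into \emph{some} coordinate. Pairwise monotonicity, $\frac{\partial F}{\partial\mathbf x_{i,j_1}}+\frac{\partial F}{\partial\mathbf x_{i,j_2}}\ge 0$, says the loss from removing $e_i$ out of coordinate $j$ is at most the gain from inserting $e_i$ into a different coordinate $j_2$, and that insertion gain is itself dominated by the greedy-selected density.

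Concretely, the accounting I expect to carry out is to show
\begin{equation*}
-F'(\mathbf x(t))\le 2\,F'(\mathbf s(t))
\end{equation*}
for all non-breakpoint $t$: one copy of $F'(\mathbf s(t))$ absorbs the genuine greedy increment as in the monotone proof, and a second copy absorbs the descent term after trading it—via pairwise monotonicity applied to the element $e_i$ being removed—for an insertion marginal that the greedy choice of $(e_{i'},j')$ dominates in density. Integrating from $0$ to $c(\mathbf o)$ then yields $f(\mathbf o)-f(\mathbf s)\le 2f(\mathbf s)$, i.e.\ $3f(\mathbf s)\ge f(\mathbf o)$. The main obstacle I anticipate is making the descent-term bound rigorous at the level of derivatives: I must justify that the marginal loss $\frac{\partial F(\mathbf x(t))}{\partial\mathbf x_{ij}}\frac{1}{c(e_i)}$ is dominated by a greedy marginal density, which requires pairing the pairwise-monotonicity inequality (a statement about two coordinates of the same element at the same point $\mathbf x(t)$) with property~\eqref{eq:submo} to move from the derivative at $\mathbf x(t)$ down to the derivative at $\lfloor\mathbf s\rfloor(t)$, and then with greedy maximality of the chosen density. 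Getting the coordinate bookkeeping right—ensuring $e_i$ is indeed available (not yet fixed in $\lfloor\mathbf s\rfloor(t)$) when we apply the greedy comparison, and that the component indices line up—is the delicate part; everything else is a direct transcription of the monotone argument with the factor $2$ in place of the vanishing descent term.
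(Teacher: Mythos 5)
Your high-level plan---the same continuous transformation, the pointwise inequality $-F'(\mathbf x(t))\le 2F'(\mathbf s(t))$, and integration over $[0,c(\mathbf o)]$---is exactly the paper's route, and the target inequality is the right one. But your derivation of that pointwise inequality swaps the roles of the two terms, and as a result both of your sub-bounds point in the wrong direction. The descent term $\frac{\partial F(\mathbf x(t))}{\partial\mathbf x_{ij}}\frac{1}{c(e_i)}$ needs no pairwise monotonicity at all: the chain from Lemma~\ref{lem:csco} (property~\eqref{eq:submo} applied with $\lfloor\mathbf s\rfloor(t)\preceq\mathbf x(t)$, then greedy maximality of the selected density) nowhere uses monotonicity of $F$, so it bounds this term by $F'(\mathbf s(t))$ verbatim in the non-monotone case. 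Moreover, your stated reading of pairwise monotonicity---``the loss from removing $e_i$ out of coordinate $j$ is at most the gain from inserting $e_i$ into a different coordinate $j_2$''---is not what pairwise monotonicity says. It says $\frac{\partial F}{\partial\mathbf x_{i,j}}+\frac{\partial F}{\partial\mathbf x_{i,j_2}}\ge 0$, i.e., a \emph{lower} bound $\frac{\partial F}{\partial\mathbf x_{i,j}}\ge -\frac{\partial F}{\partial\mathbf x_{i,j_2}}$ on the loss derivative, not an upper bound; so the trade you propose for the descent term is unsupported.

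The term that genuinely breaks without monotonicity is the other one, $-\frac{\partial F(\mathbf x(t))}{\partial\mathbf x_{i'j'}}\frac{1}{c(e_{i'})}$: in the monotone case it is simply dropped because $\frac{\partial F}{\partial\mathbf x_{i'j'}}\ge 0$, but here it can be positive, and it is this term that costs the second copy of $F'(\mathbf s(t))$. You propose to handle it ``by the same chain of inequalities as in Lemma~\ref{lem:csco},'' but that chain, via \eqref{eq:submo} and $\lfloor\mathbf s\rfloor(t)\preceq\mathbf x(t)$, only yields an \emph{upper} bound on $\frac{\partial F(\mathbf x(t))}{\partial\mathbf x_{i'j'}}$, whereas upper-bounding $-F'(\mathbf x(t))$ requires a \emph{lower} bound on this derivative. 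The correct move (the paper's) is to apply pairwise monotonicity to the greedy element $e_{i'}$, not to $e_i$: pick any $l\neq j'$, use $\frac{\partial F(\mathbf x(t))}{\partial\mathbf x_{i'j'}}\ge -\frac{\partial F(\mathbf x(t))}{\partial\mathbf x_{i'l}}$, and then bound $\frac{\partial F(\mathbf x(t))}{\partial\mathbf x_{i'l}}\frac{1}{c(e_{i'})}\le \frac{\partial F(\lfloor\mathbf s\rfloor(t))}{\partial\mathbf x_{i'l}}\frac{1}{c(e_{i'})}\le F'(\mathbf s(t))$ by \eqref{eq:submo} and greedy maximality (the pair $(e_{i'},l)$ was available when greedy chose $(e_{i'},j')$). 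With the tools reassigned this way---monotone-case chain for the descent term, pairwise monotonicity on $e_{i'}$ for the increment term---your accounting of $2F'(\mathbf s(t))$ and the integration go through exactly as you describe.
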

\begin{proof}
The proof follows the same structure as Lemma \ref{lem:csco} for monotone $k$SKM; after defining the continuous processes $\mathbf s(t)$ and $\mathbf x(t)$ in the same way, we need only replace \eqref{eq:tosee} by %the following claim:  For all non-breakpoint $t$, we have 
     \begin{equation}\label{eq:toseenon}
         F'(\mathbf x(t))\ge -2F'(\mathbf s(t)).
     \end{equation}
To show \eqref{eq:toseenon}, consider the updates at time $t$: suppose $\mathbf x_{ij}(t)$ is decreasing, and $\mathbf x_{i'j'}(t)$ is increasing, with other entries unchanged. 
Let $l\in[k]$ and $l\neq j'$. By the pairwise monotonicity, we have
\begin{equation}\label{eq:com0}
    \frac{\partial F(\mathbf x(t))}{\partial \mathbf x_{i'j'}}+\frac{\partial F(\mathbf x(t))}{\partial \mathbf x_{i'l}}\ge 0.
\end{equation} 
By the property \eqref{eq:submo} of $F$ and the greedy selection of the algorithm, we have
\begin{align}
   & \frac{\partial F(\mathbf x(t))}{\partial \mathbf x_{i'l}}\frac{1}{c(e_{i'})} \le \frac{\partial F(\lfloor\mathbf s\rfloor(t))}{\partial \mathbf x_{i'l}}\frac{1}{c(e_{i'})} \le F'(\mathbf s(t))~\text{and}\label{eq:com1} \\
   & \frac{\partial F(\mathbf x(t))}{\partial \mathbf x_{ij}}\frac{1}{c(e_i)}\le \frac{\partial F(\lfloor\mathbf s\rfloor(t))}{\partial \mathbf x_{ij}}\frac{1}{c(e_i)}\le  F'(\mathbf s(t)).\label{eq:com2}
\end{align}
%by the definition of the processes it must be $\mathbf o_{i'l}(t)=0$ for all $l\in [k]$. Fix $l\neq j'$. Define a vector $\mathbf{\tilde x}(t)$ so that it is the same to $\mathbf x(t)$ except that the dimension of $e_{i'}$ is $l$. 
Combining \eqref{eq:com0}, \eqref{eq:com1} and \eqref{eq:com2} gives
\begin{align}
    F'(\mathbf x(t))&=\frac{\partial F(\mathbf x(t))}{\partial \mathbf x_{i'j'}}\frac{1}{c(e_{i'})}-\frac{\partial F(\mathbf x(t))}{\partial \mathbf x_{ij}}\frac{1}{c(e_i)}\nonumber\\
   &\ge -\frac{\partial F(\mathbf x(t))}{\partial \mathbf x_{i'l}}\frac{1}{c(e_{i'})}  -F'(\mathbf s(t))  \nonumber\\
&\ge -2 F'(\mathbf s(t)).\nonumber
   \end{align}
     Therefore, \eqref{eq:toseenon} holds for all $t$. It follows that
     \begin{align*}
         f(\mathbf o)-f(\mathbf s)&= \int_0^{c(\mathbf o)}-F'(\mathbf x(t))dt
         \le 2\int_0^{c(\mathbf o)}F'(\mathbf s(t))dt
         =2f(\mathbf s),
     \end{align*}
     establishing the proof. 
\end{proof}

The following lemma generalizes the bound in Lemma \ref{lem:csconon} to the case when $\mathbf s$ has arbitrary cost. %The proof is similar to Lemma \ref{lem:csf} and is omitted. 

\begin{lemma}\label{lem:csfff}
    If $c(\mathbf s)= \beta\cdot c(\mathbf o)$, then $f(\mathbf s)\ge \min\{\frac{\beta}{3},\frac13\}\cdot f(\mathbf o)$.
\end{lemma}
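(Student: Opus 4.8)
\textbf{Proof proposal for Lemma \ref{lem:csfff}.}

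The plan is to mirror exactly the two-case reduction used in Lemma \ref{lem:csf} for the monotone setting, replacing the appeal to Lemma \ref{lem:csco} with the non-monotone analogue Lemma \ref{lem:csconon}. The only structural difference is the constant: where the monotone argument produced $2f(\mathbf s')\ge f(\mathbf o)$, we now have $3f(\mathbf s')\ge f(\mathbf o)$, so the target bound becomes $\min\{\frac{\beta}{3},\frac13\}\cdot f(\mathbf o)$ instead of $\min\{\frac{\beta}{2},\frac12\}\cdot f(\mathbf o)$.

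First I would handle the case $\beta\le 1$. Let $s_l$ be the last element added to $\mathbf s$ by the greedy process. I construct a modified solution $\mathbf s'$ by replacing $s_l$ with an element $s_l'$ of inflated cost $c(s_l')=c(s_l)+c(\mathbf o)-c(\mathbf s)$ while preserving all marginal densities, so that $c(\mathbf s')=c(\mathbf o)$. Because the greedy algorithm selects elements in non-increasing order of marginal density, the average density of $\mathbf s$ dominates that of $\mathbf s'$, giving $\frac{f(\mathbf s)}{c(\mathbf s)}\ge \frac{f(\mathbf s')}{c(\mathbf s')}$ and hence $f(\mathbf s)\ge \beta\cdot f(\mathbf s')$. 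Since $c(\mathbf s')=c(\mathbf o)$, Lemma \ref{lem:csconon} yields $3f(\mathbf s')\ge f(\mathbf o)$, and combining the two gives $f(\mathbf s)\ge \frac{\beta}{3}f(\mathbf o)$, as required.

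For the case $\beta>1$, I pad the optimal-side solution rather than the greedy one: construct $\mathbf o'$ by appending zero-value elements to $\mathbf o$ until $c(\mathbf o')=c(\mathbf s)$, which preserves the value so that $f(\mathbf o')=f(\mathbf o)$. Applying Lemma \ref{lem:csconon} to the pair $(\mathbf s,\mathbf o')$ with equal costs gives $f(\mathbf s)\ge \frac13 f(\mathbf o')=\frac13 f(\mathbf o)$, matching $\min\{\frac{\beta}{3},\frac13\}=\frac13$ in this regime. The two cases together establish the lemma. I anticipate no serious obstacle here, as the argument is a direct transcription of Lemma \ref{lem:csf}; the only point requiring care is verifying that the density-preserving surgery in the $\beta\le 1$ case remains legitimate without monotonicity—but since Lemma \ref{lem:csconon} already absorbs the non-monotone behavior through pairwise monotonicity, and the cost-rescaling step uses only the greedy ordering of densities (not any monotonicity of $f$), the reduction goes through unchanged.
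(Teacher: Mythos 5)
Your case $\beta\le 1$ is exactly the paper's argument: replace the last greedy element by a cost-inflated copy with the same marginal density, use the non-increasing order of greedy densities to get $f(\mathbf s)\ge \beta f(\mathbf s')$, and invoke Lemma~\ref{lem:csconon}. Your case $\beta>1$, however, diverges from the paper. The paper does \emph{not} pad $\mathbf o$ with zero-value elements (that is what Lemma~\ref{lem:csf} does in the monotone setting); instead it truncates the greedy process at time $c(\mathbf o)$, takes the (fractional) partial solution $\mathbf s'=\mathbf s(c(\mathbf o))$, applies Lemma~\ref{lem:csconon} to the pair $(\mathbf s',\mathbf o)$, and concludes via $f(\mathbf s)\ge F(\mathbf s')\ge \frac13 f(\mathbf o)$. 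Both routes are viable, but each rests on the same nontrivial fact that you did not identify: every marginal gain taken by the greedy algorithm is non-negative, which for a non-monotone $k$-submodular function is \emph{not} automatic but follows from pairwise monotonicity (for any available element, the best of its $k$ placements has non-negative gain when $k\ge 2$). The paper needs this to assert $f(\mathbf s)\ge F(\mathbf s')$, i.e., that $F$ does not decrease along the greedy path after time $c(\mathbf o)$.

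Your padding route needs this same fact in a more hidden place, and this is where your claim that ``the reduction goes through unchanged'' is too quick. For the machinery of Lemma~\ref{lem:csconon} to run on $(\mathbf s,\mathbf o')$, the appended dummies must be null elements whose marginal value is zero at \emph{every} point (extending the ground set this way preserves $k$-submodularity); zero value relative to $\mathbf o$ alone is not enough, because without monotonicity their marginals at the intermediate points of the continuous transformation are uncontrolled. Moreover, when a dummy is the decreasing element at time $t$, the inequality that the proof of Lemma~\ref{lem:csconon} obtains from greedy's optimality cannot be obtained that way here---greedy never competed against the dummy, since it is not in the ground set $V$---and what is actually needed is $0\le F'(\mathbf s(t))$, i.e., exactly the non-negativity of greedy gains supplied by pairwise monotonicity. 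So your conclusion is correct, and your two-case skeleton matches Lemma~\ref{lem:csf}, but the one point where the non-monotone case genuinely requires a new check is precisely the point your proposal waves through; the paper's truncation of $\mathbf s$ sidesteps the dummy-element construction entirely, which is presumably why it departs from the monotone proof there.
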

\begin{proof}
If $\beta\le 1$, let $s_l$ be the last element added to $\mathbf s$. We replace $s_l$ in $\mathbf s$ by an element $s_l'$ with cost $c(s_l')=c(s_l)+c(\mathbf o)- c(\mathbf s)$, while maintaining all marginal densities. The resulting solution is denoted by $\mathbf s'$. It is clear that $c(\mathbf s')=c(\mathbf o)$ and $f(\mathbf s)\ge \beta\cdot f(\mathbf s')$ by the submodularity. By Lemma \ref{lem:csconon} we have $3f(\mathbf s')\ge f(\mathbf o)$. Therefore, we have $f(\mathbf s)\ge \frac{\beta}{3}f(\mathbf o)$. 

If $\beta>1$, consider a partial solution $\mathbf s'$ of $\mathbf s$ so that $c(\mathbf s')=c(\mathbf o)$, i.e., the greedy procedure stops at time $c(\mathbf o)$. Applying  Lemma \ref{lem:csconon} gets $f(\mathbf s)\ge F(\mathbf s')\ge \frac{1}{3}f(\mathbf o)$. 
\end{proof}

\begin{theorem}
    \textsc{$1$-Guess Greedy} is $\frac13$-approximation for non-monotone $k$SKM. %\bluecomment{In addition, using the thresholding technique, there is a $(\frac13-\epsilon)$-approximation algorithm with running time $\tilde O(n^2k^2/\epsilon)$. }
\end{theorem}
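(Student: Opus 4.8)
The plan is to mirror the monotone argument from Theorem~\ref{thm:12}, substituting the non-monotone machinery (Lemma~\ref{lem:csfff} in place of Lemma~\ref{lem:csf}) and adjusting the bookkeeping so that the three separately bounded pieces each lose only a factor of $\frac13$. First I would set up the same residual-function notation: for any $\mathbf x\in(k+1)^V$ define $f_{\mathbf x}(\mathbf x')=f(\mathbf x'\sqcup\mathbf x)-f(\mathbf x)$, which remains $k$-submodular (though now possibly non-monotone). Let $\mathbf o^*$ be an optimal solution. As before I would single out $\mathbf y\in\arg\max_{\mathbf x\preceq\mathbf o^*:|\supp(\mathbf x)|=1}f(\mathbf x)$, the best singleton inside $\mathbf o^*$, then pick $\mathbf z\in\arg\max_{\mathbf x\preceq\mathbf o^*\setminus\mathbf y:|\supp(\mathbf x)|=1}c(\mathbf x)$, the remaining element of largest cost, and set $\mathbf r=\mathbf o^*\setminus\mathbf y\setminus\mathbf z$. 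The decomposition
\[f(\mathbf o^*)=f(\mathbf y)+f_{\mathbf y}(\mathbf r)+f_{\mathbf y\sqcup\mathbf r}(\mathbf z)\]
is a purely telescoping identity and holds regardless of monotonicity.

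Next I would condition on \textsc{1-Guess Greedy} guessing $\mathbf y$ correctly and let $\mathbf s$ be the greedy solution at the moment it first rejects an element $e\in\mathbf o^*$ for budget reasons. The rejection condition $c(\mathbf s)+c(e)>B\ge c(\mathbf o^*)$ combined with the choice of $\mathbf z$ as the costliest residual element gives $c(\mathbf s\setminus\mathbf y)\ge c(\mathbf r)$, exactly as in the monotone proof. Applying Lemma~\ref{lem:csfff} to the residual function $f_{\mathbf y}$ with the greedy run on it (which has $\beta\ge 1$ since $c(\mathbf s\setminus\mathbf y)\ge c(\mathbf r)$) yields
\[f_{\mathbf y}(\mathbf s\setminus\mathbf y)\ge\tfrac13 f_{\mathbf y}(\mathbf r).\]
Then I would assemble the bound as
\[f(\mathbf s)=f(\mathbf y)+f_{\mathbf y}(\mathbf s\setminus\mathbf y)\ge\tfrac13 f(\mathbf y)+\tfrac13 f_{\mathbf y\sqcup\mathbf r}(\mathbf z)+\tfrac13 f_{\mathbf y}(\mathbf r)=\tfrac13 f(\mathbf o^*),\]
where the middle term $f(\mathbf y)\ge f_{\mathbf y\sqcup\mathbf r}(\mathbf z)$ should follow from the choice of $\mathbf y$ as the maximum-value singleton together with pairwise monotonicity.

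The step I expect to be the main obstacle is precisely the inequality $f(\mathbf y)\ge f_{\mathbf y\sqcup\mathbf r}(\mathbf z)$ absorbing the $\mathbf z$ term. In the monotone case this is immediate because $f_{\mathbf y\sqcup\mathbf r}(\mathbf z)\le f(\mathbf z)\le f(\mathbf y)$ by monotonicity and the optimality of $\mathbf y$ among singletons. Without monotonicity, $f_{\mathbf y\sqcup\mathbf r}(\mathbf z)$ is a marginal that need not be dominated by a singleton value so cleanly; I would need to argue that $f_{\mathbf y\sqcup\mathbf r}(\mathbf z)\le f(\mathbf z)$ still holds via orthant submodularity (the marginal of adding $\mathbf z$ to the larger support $\mathbf y\sqcup\mathbf r$ is at most the marginal of adding it to $\mathbf 0$, which is $f(\mathbf z)$), and that $f(\mathbf z)\le f(\mathbf y)$ by the definition of $\mathbf y$. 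This keeps the whole argument within the properties already established (orthant submodularity and pairwise monotonicity from Proposition~\ref{prop:cha}), so I would verify that each of these two sub-inequalities uses only those, and that the factor $\frac13$ rather than $\frac12$ on the $\mathbf y$ term is what the non-monotone density lemma forces.
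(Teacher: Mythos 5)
Your proposal is correct and follows essentially the same route as the paper's proof: the same $\mathbf y,\mathbf z,\mathbf r$ decomposition of $f(\mathbf o^*)$, the same cost argument giving $c(\mathbf s\setminus\mathbf y)\ge c(\mathbf r)$, the same application of Lemma~\ref{lem:csfff} to the residual function $f_{\mathbf y}$, and the same final assembly (the paper weights the first two terms by $\frac12$ instead of $\frac13$, an immaterial difference). The step you flagged as the main obstacle is resolved exactly as you suggest—$f_{\mathbf y\sqcup\mathbf r}(\mathbf z)\le f(\mathbf z)$ by orthant submodularity and $f(\mathbf z)\le f(\mathbf y)$ by the choice of $\mathbf y$ (legitimate since $\mathbf z\preceq\mathbf o^*$)—which, combined with $f(\mathbf y)\ge 0$ from the non-negativity of $f$ (implicitly needed for your all-$\frac13$ weighting), completes the argument.
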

\begin{proof}
    Let $\mathbf o^*$ be an optimal solution. Define $\mathbf y,\mathbf z,\mathbf r$ as in the proof of Theorem \ref{thm:12}, such that   \[f(\mathbf o^*)=f(\mathbf y)+f_{\mathbf y}(\mathbf r)+f_{\mathbf y\sqcup \mathbf r}(\mathbf z).\] 
    Suppose that \textsc{$1$-Guess Greedy} guesses $\mathbf y$ correctly, and let $\mathbf s$ be the greedy solution at the point when the algorithm first rejects an element $e\in\mathbf o^*$ due to the budget constraint. By construction, $c(\mathbf s)+c(e)>B\ge c(\mathbf o^*)$, which leads to 
    $c(\mathbf s\setminus \mathbf y)\ge c(\mathbf r)$ by the definition of $\mathbf r$. Applying Lemma \ref{lem:csfff} to  $\mathbf s \setminus \mathbf y$, $\mathbf r$ and $f_{\mathbf y}$, we obtain 
    \[f_{\mathbf y}(\mathbf s\setminus \mathbf y)\ge \frac13 f_{\mathbf y}(\mathbf r).\]
    Therefore, it follows that \begin{align*}
        f(\mathbf s)&=f(\mathbf y)+f_{\mathbf y}(\mathbf s\setminus \mathbf y)
        \ge \frac12 f(\mathbf y)+\frac12 f_{\mathbf y\sqcup \mathbf r}(\mathbf z)+\frac13 f_{\mathbf y}(\mathbf r)\ge\frac13 f(\mathbf o^*).
    \end{align*}
\end{proof}

%%%%%%%%%%%%%%%%%%%%%%%%%%%%%%%%%%%%%%%
\section{Extensions and Discussions}\label{sec:dis}

The extensions of our analysis approach based on continuous transformations are twofold. First, our method can be used to improve the approximation ratios of various existing $k$SKM algorithms beyond \textsc{$q$-Guess Greedy}, including several greedy and streaming algorithm variants \cite{ha2024improved,pham2022maximizing,DBLP:journals/tcs/TangCW24}. Second, the method naturally extends to $k$-submodular maximization under broader support constraints, offering a more general and flexible analysis framework.

\paragraph*{Improving other existing $k$SKM algorithms}

%Regarding the improvement on other algorithms for $k$SKM, we take the  \textsc{Greedy+Singleton} algorithm as an example. This classic algorithm returns the better solution between the greedy solution and the best singleton feasible solution. It is highly natural and has a long research history. It is well known to be $\frac12$-approximation for knapsack problems under linear objectives, and the approximation ratio is in the range $[0.427,0.42945]$ under monotone submodular objectives \cite{FeldmanNS23,DBLP:journals/orl/KulikSS21}. For $k$SKM, while the work of \cite{DBLP:journals/tcs/TangCW24} shows that it is 0.273-approximation for monotone case and 0.219-approximation for non-monotone case, we improve it to $\frac13$ and $\frac14$, respectively.  

To illustrate how our method enhances current approaches for $k$SKM, consider the classic \textsc{Greedy+Singleton} algorithm, which selects the better of the standard greedy solution and the best feasible singleton. This algorithm is well-known, with a long history, and achieves a $\frac{1}{2}$-approximation for linear objectives under knapsack constraints, while its approximation ratio for monotone submodular objectives lies in $[0.427, 0.42945]$ (see \cite{FeldmanNS23,DBLP:journals/orl/KulikSS21}). For $k$SKM, prior work \cite{DBLP:journals/tcs/TangCW24} establishes approximation ratios of $0.273$ in the monotone case and $0.219$ in the non-monotone case. Using the bounds obtained from our continuous method, we are able to improve these ratios to $\frac{1}{3}$ and $\frac{1}{4}$, respectively.
 
\begin{proposition}
    \textsc{Greedy+Singleton} is $\frac13$-approximation for monotone $k$SKM and $\frac14$-approximation for non-monotone $k$SKM, in $O(n^2k)$ time. 
\end{proposition}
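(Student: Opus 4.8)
The plan is to leverage \cref{lem:csf} and \cref{lem:csfff} in essentially the same way as in the \textsc{1-Guess Greedy} proofs, but now without the luxury of a guessed element. The core difficulty is that \textsc{Greedy+Singleton} never inserts a specifically chosen element from the optimum at the start; instead, the greedy run begins from $\mathbf 0$. I would let $\mathbf o^*$ be an optimal solution and run plain \textsc{Greedy}, stopping $\mathbf s$ at the first moment an element $e'\in\text{supp}(\mathbf o^*)$ is rejected for budget reasons. As in the theorems, rejection gives $c(\mathbf s)+c(e')>B\ge c(\mathbf o^*)$, so if we let $\mathbf z$ be the single rejected element and $\mathbf r=\mathbf o^*\setminus\mathbf z$, then $c(\mathbf s)\ge c(\mathbf r)$, i.e. $\beta\ge 1$ relative to $\mathbf r$. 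Applying \cref{lem:csf} (monotone) or \cref{lem:csfff} (non-monotone) then yields $f(\mathbf s)\ge \frac12 f(\mathbf r)$ or $f(\mathbf s)\ge\frac13 f(\mathbf r)$ respectively, where $f(\mathbf r)=f(\mathbf o^*\setminus\mathbf z)$.

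The remaining task is to account for the value lost by dropping $\mathbf z$ from $\mathbf o^*$. Here I would use the best-singleton branch of the algorithm. Writing $\mathbf s^{\text{sing}}$ for the best feasible singleton, we certainly have $f(\mathbf s^{\text{sing}})\ge f(\mathbf z)$ since $\mathbf z$ is itself a feasible singleton. By $k$-submodularity (orthant submodularity), the marginal contribution of $\mathbf z$ to $\mathbf o^*\setminus\mathbf z$ is at most its standalone value, so $f(\mathbf o^*)\le f(\mathbf o^*\setminus\mathbf z)+f(\mathbf z)\le f(\mathbf r)+f(\mathbf s^{\text{sing}})$. Combining with the greedy bound, in the monotone case the output $\mathbf s^{(g)}$ satisfies $f(\mathbf s^{(g)})\ge\max\{f(\mathbf s),f(\mathbf s^{\text{sing}})\}$, and averaging gives something like $f(\mathbf s^{(g)})\ge\tfrac13\big(f(\mathbf r)+\text{two copies of }\tfrac12 f(\mathbf s^{\text{sing}})\big)$; the right weighting should produce exactly $\frac13 f(\mathbf o^*)$ in the monotone case and $\frac14 f(\mathbf o^*)$ in the non-monotone case.

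Concretely, in the monotone case I expect to combine $2f(\mathbf s^{(g)})\ge 2f(\mathbf s)\ge f(\mathbf r)$ with $f(\mathbf s^{(g)})\ge f(\mathbf s^{\text{sing}})\ge f(\mathbf z)$, so that $3f(\mathbf s^{(g)})\ge f(\mathbf r)+f(\mathbf z)\ge f(\mathbf o^*)$, yielding the $\frac13$ ratio. In the non-monotone case the greedy bound weakens to $3f(\mathbf s)\ge f(\mathbf r)$, giving $3f(\mathbf s^{(g)})\ge f(\mathbf r)$ and $f(\mathbf s^{(g)})\ge f(\mathbf z)$; adding these with appropriate weights yields $4f(\mathbf s^{(g)})\ge f(\mathbf r)+f(\mathbf z)\ge f(\mathbf o^*)$, hence the $\frac14$ ratio. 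The running time claim $O(n^2k)$ follows because \textsc{Greedy} runs $n$ iterations each scanning $nk$ element-component pairs, and the best-singleton computation is a single $O(nk)$ pass.

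The main obstacle I anticipate is the bookkeeping around the rejected element $\mathbf z$: one must verify that $\mathbf z$ is a genuine feasible singleton (so that $f(\mathbf s^{\text{sing}})\ge f(\mathbf z)$ is valid under the budget), and that the decomposition $f(\mathbf o^*)\le f(\mathbf r)+f(\mathbf z)$ holds in the non-monotone regime, where monotonicity cannot be invoked and one must rely purely on orthant submodularity applied to the marginal $\Delta_{\mathbf z}(\mathbf o^*\setminus\mathbf z)$. A minor subtlety is ensuring the $\beta\ge 1$ hypothesis of the lemmas is met exactly, which hinges on $\mathbf z$ being the (or a) rejected optimal element so that $c(\mathbf s)\ge c(\mathbf o^*)-c(\mathbf z)=c(\mathbf r)$.
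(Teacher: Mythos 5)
Your proposal is correct and takes essentially the same route as the paper's own proof: stop \textsc{Greedy} at the first rejection of an element of $\mathbf o^*$, apply Lemma~\ref{lem:csf} (resp.\ Lemma~\ref{lem:csfff}) with $\beta\ge 1$ to $\mathbf o^*$ minus one element, and let the best feasible singleton cover that element's value via the orthant-submodularity bound $f(\mathbf o^*)\le f(\mathbf r)+f(\mathbf z)$. The only (immaterial) differences are that you take $\mathbf z$ to be the rejected element itself while the paper takes the maximum-cost element of $\mathbf o^*$ (both yield $c(\mathbf s)\ge c(\mathbf r)$), and you sum the two bounds linearly where the paper argues by cases on whether $f(\mathbf z)$ is large.
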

\begin{proof}
    Let $\mathbf o^*$ be an optimal solution, and $\mathbf z\preceq \mathbf o^*$ be the singleton in $\mathbf o^*$ with the largest cost. Denote by $\mathbf s$ the greedy solution when it first rejects an element in $\mathbf o^*$ due to the budget constraint, and it implies that $c(\mathbf s)\ge c(\mathbf o^*\setminus \mathbf z)$.
    When $f$ is monotone,
    if $f(\mathbf z)\ge \frac13f(\mathbf o^*)$, then \textsc{Singleton} finds it and gives the approximation ratio. Otherwise, $f(\mathbf o^*\setminus \mathbf z)\ge \frac23f(\mathbf o^*)$. By Lemma \ref{lem:csf}, we have $f(\mathbf s)\ge \frac12f(\mathbf o^*\setminus \mathbf z)\ge \frac13f(\mathbf o^*)$.

When $f$ is non-monotone, if $f(\mathbf z)\ge \frac14f(\mathbf o^*)$, then \textsc{Singleton} finds it and gives the approximation ratio. Otherwise, $f(\mathbf o^*\setminus \mathbf z)\ge \frac34f(\mathbf o^*)$. By Lemma \ref{lem:csfff}, we have $f(\mathbf s)\ge \frac13f(\mathbf o\setminus \mathbf z)\ge \frac14f(\mathbf o^*)$.
\end{proof}

The \textsc{Greedy$^+$} algorithm follows the standard \textsc{Greedy} procedure, but outputs the best feasible solution obtained by augmenting any partial greedy solution with a single additional element. Previously, it was shown to achieve a $\frac13$-approximation for monotone $k$SKM and a $\frac14$-approximation for non-monotone $k$SKM \cite{ha2024improved,DBLP:conf/cocoa/TangCWWJ23}. By constructing a continuous transformation that differs from that in the proof of Lemma \ref{lem:csco}, we can improve these ratios to $\frac{4}{11}\approx 0.363$ and $\frac{5}{18}\approx 0.277$, respectively. Furthermore,  existing (semi-)streaming algorithms provide a $(\frac14-\epsilon)$-approximation for monotone $k$SKM \cite{ha2024improved,pham2022maximizing}. With our continuous transformation method, we can enhance their approximation ratio to at least $0.26-\epsilon$ by adding one additional pass of input stream, while maintaining the $O(nk/\epsilon)$ query complexity and $O(\min\{n,B\}/\epsilon)$ space complexity.

\paragraph*{$k$-submodular maximization  under other constraints}
For $k$-submodular maximization problems under a cardinality constraint, a matroid constraint, or without any constraint, both the optimal solution $\mathbf o$ and the greedy solution 
$\mathbf s$ have the same cardinality, denoted by $B$ (as all maximum independent sets in a matroid have equal size).  Previous works~\cite{ohsaka2015monotone, sakaue2017maximizing, ward2016maximizing} use a discrete transformation to show that the greedy algorithm is $\frac12$-approximation in the monotone case.
 Let $\mathbf s^{(j)}$ be the greedy solution after the $j$-th iteration, with $\mathbf s^{(0)}=\mathbf 0$ and $\mathbf s^{(B)}=\mathbf s$. The transformation starts with  \(\mathbf o^{(0)}=\mathbf{o}\) and, through a sequence of updates \(\mathbf{o}^{(0)}, \mathbf{o}^{(1)}, \ldots, \mathbf{o}^{(B)}\), gradually adjusts it to match the greedy solution \(\mathbf{s}\). At each step \(j\), the element and its placement in \(\mathbf{o}^{(j-1)}\) corresponding to the greedy choice at that iteration is replaced with that chosen by the greedy algorithm in \(\mathbf{s}^{(j)}\). Repeating this process for all \(B\) steps, \(\mathbf{o}^{(B)}\) becomes identical to \(\mathbf{s}^{(B)} = \mathbf{s}\).  To show the $\frac12$-approximation, it suffices to prove $f(\mathbf s^{(j)})-f(\mathbf s^{(j-1)})\ge f(\mathbf o^{(j-1)})-f(\mathbf o^{(j)})$ for all $j\in[B]$, because summing up all inequalities gives $f(\mathbf s)-f(\mathbf 0)\ge f(\mathbf o)-f(\mathbf s)$.
The proof of the $\frac13$-approximation for the greedy algorithm in the non-monotone case follows similarly.

This discrete transformation is, in fact, a special case of our continuous transformation as used in the proof of Lemma \ref{lem:csco}, where every element has unit transformation rate and processing time, and the total duration is $B$. Hence, our continuous approach naturally applies to problems under these constraints. Moreover, it offers greater flexibility compared to the discrete approach, as it can handle situations where the solutions of interest do not have identical cardinality, and it may inspire new insights for analyzing combinatorial algorithms in such situations.

%% Please use bibtex, 

\bibliography{reference}
\end{document}